\title{A Bicriterion Concentration Inequality and Prophet Inequalities for $k$-Fold Matroid Unions} %
\author{Noga Alon}{Department of Mathematics, Princeton University, Princeton, USA \and Schools of Mathematics and Computer Science, Tel Aviv University, Tel Aviv, Israel}{nalon@math.princeton.edu}{https://orcid.org/0000-0003-1332-4883}{Research supported in part by NSF grant DMS-2154082.} %
\author{Nick Gravin}{Key Laboratory of Interdisciplinary
Research of Computation and Economics, Shanghai University of Finance and Economics, Shanghai, China}{nikolai@mail.shufe.edu.cn}{https://orcid.org/0000-0002-3845-947X}{Research is supported by National Key R \& D Program of China (2023YFA1009500), NSFC grant 61932002, ``the Fundamental Research Funds for the Central Universities in China''.}
\author{Tristan Pollner}{Department of Management Science and Engineering, Stanford University, Stanford, USA}{tpollner@stanford.edu}{}{}
\author{Aviad Rubinstein}{Department of Computer Science, Stanford University, Stanford, USA}{aviad@cs.stanford.edu}{https://orcid.org/0000-0002-6900-8612}{Research supported in part by NSF CCF-1954927, and a David and Lucile Packard Fellowship.}
\author{Hongao Wang}{Department of Computer Science, Purdue University, West Lafayette, USA}{wang5270@purdue.edu}{https://orcid.org/0000-0003-2025-2443}{}
\author{S. Matthew Weinberg}{Department of Computer Science, Princeton University, Princeton, USA}{smweinberg@princeton.edu}{https://orcid.org/0000-0001-7744-795X}{Research supported in part by NSF CCF-1955205. During Professor Weinberg’s development of this paper, he participated as an expert witness on behalf of the State of Texas in ongoing litigation against Google (the ``Google Litigation'').}
\author{Qianfan Zhang}{Department of Computer Science, Princeton University, Princeton, USA}{qianfan@princeton.edu}{https://orcid.org/0000-0003-3737-1545}{Research supported in part by NSF CCF-1955205.}
\authorrunning{N. Alon, N. Gravin, T. Pollner, A. Rubinstein, H. Wang, S. M. Weinberg, and Q. Zhang} %
\keywords{Prophet Inequalities, Online Contention Resolution Schemes, Concentration Inequalities} %
\declaretheorem[sibling=theorem]{fact}
\renewcommand{\Pr}{\operatorname*{\mathbf{Pr}}}
\DeclareMathOperator*{\E}{\mathbf{E}}
\DeclareMathOperator{\Ber}{\mathrm{Ber}}
\DeclareMathOperator{\OPT}{OPT}
\DeclareMathOperator{\Span}{span}
\DeclareMathOperator{\conv}{ConvexHull}
\DeclareMathOperator{\rank}{rank}
\DeclareMathOperator{\girth}{girth}
\newcommand{\vb}{\boldsymbol}
\newcommand{\cD}{\mathcal{D}}
\newcommand{\cF}{\mathcal{F}}
\newcommand{\cI}{\mathcal{I}}
\newcommand{\cM}{\mathcal{M}}
\newcommand{\cP}{\mathcal{P}}
\newcommand{\cX}{\mathcal{X}}
\newcommand{\occ}{\omega}
\begin{document}

\maketitle

\begin{abstract}
We investigate prophet inequalities with competitive ratios approaching $1$, seeking to generalize $k$-uniform matroids. We first show that large girth does \emph{not} suffice: for all $k$, there exists a matroid of girth $\geq k$ and a prophet inequality instance on that matroid whose optimal competitive ratio is $\frac{1}{2}$. Next, we show $k$-fold matroid unions \emph{do} suffice: we provide a prophet inequality with competitive ratio $1-O(\sqrt{\frac{\log k}{k}})$ for any $k$-fold matroid union. Our prophet inequality follows from an online contention resolution scheme.

The key technical ingredient in our online contention resolution scheme is a novel bicriterion concentration inequality for arbitrary monotone $1$-Lipschitz functions over independent items which may be of independent interest. Applied to our particular setting, our bicriterion concentration inequality yields ``Chernoff-strength'' concentration for a $1$-Lipschitz function that is not (approximately) self-bounding.

\end{abstract}

\section{Introduction}
\label{sec:intro}
Prophet inequalities are fundamental problems in optimal stopping theory, whose study dates back to seminal work of Krengel and Sucheston~\cite{KrengelS78}, and that have wide applications across Economics and Computer Science (e.g., \cite{ChawlaHMS10, DuttingFKL20}). A prophet inequality instance contains a ground set $E$ of elements, a family $\mathcal{F}\subseteq 2^E$ of feasible sets, and a collection of distributions $\{\mathcal{D}_e\}_{e \in E}$. For one element at a time, a random variable $v_e$ is drawn from distribution $\mathcal{D}_e$ independently and revealed to a gambler, who immediately and irrevocably decides whether to accept or reject $e$. The gambler must at all times maintain the set of accepted elements $A \in \cF$, and gets payoff $\sum_{e \in A} v_e$ at the end of the game. A prophet inequality is $c$-competitive if it guarantees $\E[\sum_{e \in A} v_e] \geq c \cdot \E[\max_{S \in \mathcal{F}} \sum_{e \in S} v_e]$.\footnote{The expectation is taken with respect to the random variables $\{v_e\}_{e \in E}$, which in turn makes $A$ a random variable.}

Krengel and Sucheston's seminal result establishes a $\frac{1}{2}$-competitive prophet inequality for any instance where $\mathcal{F}$ is a $1$-uniform matroid (i.e.~at most one element is feasible to accept), and moreover establish that no better guarantee is possible.\footnote{That is, there exist prophet inequality instances over $1$-uniform matroids for which better than a $\frac{1}{2}$-competitive ratio is impossible. The hard instance is quite simple: $v_1 \sim \cD_1$ is a point mass at $1$, and $v_2 \sim \cD_2$ is equal to $\frac{1}{\varepsilon}$ with probability $\varepsilon$ and $0$ otherwise. A gambler who sees $v_1$ first cannot achieve expect reward exceeding $1$, but a prophet who always takes the maximum can achieve expected reward of $2-\varepsilon$.}
For $k$-uniform matroids, however, a significantly improved guarantee of $1-O(\frac{1}{\sqrt{k}})$ is possible~\cite{Alaei14, JWZ22, DW24}. This motivates the following question: for a given $\varepsilon > 0$, what conditions on $\mathcal{F}$ suffice for a $(1-\varepsilon)$-competitive prophet inequality?

\paragraph*{Main Result I: Large Girth does not Suffice.} A natural starting point to address this question is to first understand what makes $k$-uniform matroids ``special'' in the sense that the canonical hard instance cannot be embedded. One conjecture might be because $k$-uniform matroids have large \emph{girth}: there are \emph{no infeasible sets of size $\le k$}. So, a natural first question to ask is whether $\mathcal{F}$ having large girth suffices in order to conclude that any instance over $\mathcal{F}$ admits a $c$-competitive prophet inequality. Our first main result establishes that large girth does \emph{not} suffice.

\begin{restatable}{theorem}{restategirth}
\label{thm:girth}
For all $k \geq 1$ and $\varepsilon > 0$, there exists a prophet inequality instance $(E, \mathcal{F},\{\mathcal{D}\}_{e \in E})$ such that: (a) $(E,\mathcal{F})$ is a graphic matroid with girth $k$, and (b) $(E, \mathcal{F}, \{\mathcal{D}\}_{e \in E})$ does not admit a $(\frac{1}{2}+\varepsilon)$-competitive prophet inequality.
\end{restatable}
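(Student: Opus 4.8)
The plan is to embed the canonical hard $1$-uniform instance "inside" a graphic matroid of girth $k$ by replacing the single dangerous element with a long structure that forces the same tradeoff. Recall the hard $1$-uniform instance: element $1$ is a point mass at $1$, element $2$ is $\tfrac1\eps$ with probability $\eps$ and $0$ otherwise; a gambler who must decide on $1$ first gets at most $\approx 1$, while the prophet gets $\approx 2-\eps$. To make a graphic matroid of girth $k$, I would take the underlying graph to be a cycle $C_k$ on vertices $u_0,u_1,\dots,u_{k-1}$ together with one extra "chord-like" edge; more precisely, take $k$ parallel-ish edges arranged so the only circuits have size exactly $k$. Concretely: let $G$ consist of a path $P = u_0 - u_1 - \cdots - u_{k-1}$ of $k-1$ edges $e_1,\dots,e_{k-1}$, plus a single extra edge $f$ from $u_0$ to $u_{k-1}$. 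Then $(E,\mathcal F)$ is the graphic matroid of $G$; its only circuit is the full cycle of length $k$, so the girth is exactly $k$, and the rank is $k-1$.

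Next I would place the distributions so that the prophet/gambler gap on the "$f$ vs.\ the path" decision mimics Krengel--Sucheston. Put $v_{e_i}$ a point mass at $1$ for each $i=1,\dots,k-1$, and let $v_f$ equal $\tfrac{k-1}{\eps}$ with probability $\eps$ and $0$ otherwise. The elements should be revealed in the order $e_1,\dots,e_{k-1}$ first and $f$ last. The prophet: with probability $\eps$, she sees $v_f = \tfrac{k-1}{\eps}$ and takes the spanning tree consisting of $f$ together with $e_1,\dots,e_{k-2}$ (dropping one path edge to avoid the unique circuit), getting $\tfrac{k-1}{\eps} + (k-2)$; with probability $1-\eps$ she takes all of $e_1,\dots,e_{k-1}$, getting $k-1$. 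So $\E[\OPT] \ge \eps\cdot\tfrac{k-1}{\eps} + (1-\eps)(k-1) = (k-1)(2-\eps)$. The gambler, seeing the path edges first, faces a dilemma: if he accepts all of $e_1,\dots,e_{k-1}$ he has a spanning tree and can never add $f$ (it would close the unique circuit), so he must leave at least one path edge rejected in order to retain the option of $f$. But $f$ pays off only with probability $\eps$. A short optimization shows any gambler strategy earns at most $(k-1) + O(\eps \cdot \tfrac{k-1}{\eps}) \cdot(\text{small}) $—more carefully, rejecting $j$ path edges costs $j$ deterministically and gains at most $\eps\cdot\tfrac{k-1}{\eps}=k-1$ only if $j\ge 1$, and one checks the best response gives value $\le (k-1)+ (k-1)\eps$ or so, hence competitive ratio $\le \tfrac{(k-1)+(k-1)\eps}{(k-1)(2-\eps)} = \tfrac{1+\eps}{2-\eps} < \tfrac12 + \eps'$ after rescaling $\eps$.

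The main obstacle—and the place I would spend the most care—is making the gambler's upper bound fully rigorous against \emph{adaptive} strategies and against the subtlety that the gambler could reject several path edges to "hedge" on $f$: I need to argue that because $v_f$'s payoff is bounded ($\tfrac{k-1}{\eps}$ with probability $\eps$, contributing exactly $k-1$ in expectation) while each rejected deterministic edge costs $1$ for sure, the gambler gains essentially nothing by rejecting more than one path edge, and in fact even rejecting one edge only breaks even up to the $\eps$ factor, so no strategy beats $(k-1)(1+O(\eps))$. Formally this is a one-line exchange/LP argument: condition on whether $f$ is "big," write the gambler's expected reward as a function of the (possibly randomized, adaptive) number of path edges he keeps, and bound it. I would also double-check the girth claim: in the graphic matroid of $G$, a set is dependent iff it contains a cycle of $G$, and the only cycle is the $k$-edge one, so every set of size $<k$ is independent, i.e.\ girth $=k$; and I should note the instance uses only two distinct "types" of distribution so it is genuinely simple. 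Finally, rescaling: given target $\eps>0$, choose the internal parameter (call it $\delta$) with $\tfrac{1+\delta}{2-\delta} < \tfrac12+\eps$, which holds for all small enough $\delta$, completing the proof.
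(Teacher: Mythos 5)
Your construction does not work, and the problem is already visible in the arithmetic you wave past. In your single-cycle graph $C_k$, the gambler can simply accept $e_1,\dots,e_{k-2}$ (rejecting only $e_{k-1}$) and then \emph{always} take $f$, whatever its value. This is feasible (those $k-1$ edges form a spanning tree, not the full cycle), and the expected value is
\[
(k-2) \;+\; \E[v_f] \;=\; (k-2) + \eps\cdot\tfrac{k-1}{\eps} \;=\; 2k-3,
\]
whereas the prophet gets $(k-1)(2-\eps) \approx 2k-2$. The gambler's ratio is therefore $\frac{2k-3}{2k-2} = 1-\frac{1}{2k-2}$, which tends to $1$ as $k\to\infty$, the opposite of what you need. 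The slip is in the sentence ``the best response gives value $\le (k-1)+(k-1)\eps$'': you correctly note that the expected gain from $f$ is $k-1$ (since $\eps$ and $1/\eps$ cancel), but that gain is collected \emph{in addition to} the $k-2$ deterministic edges the gambler still keeps, so the total is $2k-3$, not $(k-1)(1+\eps)$. The single copy of the Krengel--Sucheston gadget gets diluted by the $k-2$ free deterministic edges that ride along with it; one circuit of length $k$ is far too ``thin'' a structure to create a constant-factor loss.

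The paper's construction avoids this dilution by packing in many \emph{independent} copies of the hard gadget while keeping the girth large, and for that density is essential. They start from a high-girth graph $G_n$ with $n$ vertices and $m \geq n\log n$ edges (a construction of Lazebnik--Ustimenko--Woldar), subdivide each edge $e_i$ into a length-$2$ path $(f_i,f_i')$, and put the deterministic-$1$ on $f_i$ and the $\eps$-vs-$1/\eps$ lottery on $f_i'$. Now each of the $m$ pairs is its own single-item hard instance, the offline optimum is about $m(2-\eps)$, and the decisive combinatorial observation is that an online algorithm can accept \emph{both} edges of a pair for at most $n-1$ indices $i$ (otherwise the corresponding $e_i$'s would contain a cycle in $G_n$, hence a cycle in the subdivided graph). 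Since $m/n \to \infty$, the fraction of pairs where the gambler can ``double up'' vanishes, the algorithm's value is at most $m + n(1+1/\eps)$, and the ratio tends to $\frac{1}{2-\eps}$. Your proposal has the right building block (the two-edge pair with a deterministic and a lottery edge) but is missing the key idea of embedding many such pairs into a dense high-girth graph so that cycles, not a single cycle, are the binding constraint; without that, the argument cannot go below ratio $1-\Theta(1/k)$.
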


Our construction leverages dense graphs of high girth (and a particular construction of~\cite{LUW95}) in order to effectively embed multiple copies of the canonical hard $1$-uniform instance. See \autoref{sec:largegirth} for further details.

\paragraph*{Main Result II: $k$-fold Matroid Unions Suffice.} \autoref{thm:girth} motivates richer generalizations of $k$-uniform matroids. We next consider \emph{$k$-fold matroid unions}, observing that $k$-uniform matroids are the union of $k$ $1$-uniform matroids. Given a matroid $\mathcal{M}=(E,\cF)$ over ground set $E$ with feasible sets $\mathcal{F}$, the \emph{$k$-fold union of $\cM$} is a new matroid $\mathcal{M}^k$ with ground set $E$ and feasible sets $\mathcal{F}^k:=\{F_1 \cup F_2 \cup\cdots \cup F_k\ :\ F_1, F_2,\ldots, F_k \in \mathcal{F}\}$. That is, a set is feasible in $\mathcal{M}^k$ if it can be partitioned into $k$ sets that are each feasible in $\mathcal{M}$.

\begin{restatable}{theorem}{restatekfoldunion}
\label{thm:kfoldunion}
For every prophet inequality instance $(E, \mathcal{F}^k, \{\mathcal{D}_e\}_{e \in E})$ where $(E,\mathcal{F}^k)$ is the $k$-fold union of a matroid $(E,\mathcal{F})$, there exists a $(1-O(\sqrt{\frac{\log k}{k}}))$-competitive prophet inequality.
\end{restatable}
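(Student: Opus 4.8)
\textit{(Proof proposal.)} The plan is to derive \autoref{thm:kfoldunion} from an online contention resolution scheme (OCRS). By the standard reduction from prophet inequalities to OCRSs through the ex-ante relaxation (in the style of Feldman--Svensson--Zenklusen and of Lee--Singla), a $c$-selectable OCRS for the matroid polytope of $(E,\cF^k)$ yields a $c$-competitive prophet inequality for $(E,\cF^k)$, so it suffices to build an OCRS of selectability $1-O(\sqrt{\log k/k})$. Fix $\eps=\Theta(\sqrt{\log k/k})$. First I would take the ex-ante point $x^\ast$ in the polytope and scale it down to $x:=(1-\eps)\,x^\ast$, losing a $(1-\eps)$ factor but ensuring $x(S)\le(1-\eps)\,k\,\rank_\cM(S)$ for every $S\subseteq E$ --- so every nonempty ``direction'' now carries additive slack $\Omega(\eps k)$. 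On this $x$ run the \emph{greedy} OCRS on $\cM^k$: process elements in the (adversarial) online order and accept an active element iff this keeps the accepted set independent in $\cM^k$. Since a greedily built maximal independent set is a basis, after every prefix the accepted set has the same $\cM^k$-span as the set of active elements seen so far; hence an active element $e$ is accepted exactly when $e\notin\Span_{\cM^k}(R_{<e})$, where $R$ is the random active set ($\Pr[e'\in R]=x_{e'}$, independently). Taking the worst-case order (which puts $e$ last), it then remains to prove, for every $e\in E$, that
\[
\Pr\left[\,e\in\Span_{\cM^k}(R\setminus\{e\})\,\right]\ \le\ O\!\left(\sqrt{\tfrac{\log k}{k}}\,\right),
\]
since this gives selectability $\ge (1-\eps)\bigl(1-O(\sqrt{\log k/k})\bigr)=1-O(\sqrt{\log k/k})$.

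The next step is to recast the span event combinatorially. Writing $r=\rank_\cM$, the matroid-union rank identity gives $\rank_{\cM^k}(T)=|T|-D(T)$ with the ``deficiency'' $D(T):=\max_{S\subseteq E}(|T\cap S|-k\,r(S))$, and (replacing each maximizing witness $S$ by the flat $\Span_\cM(S\cap T)$) one obtains $D(T)=\max(A(T),C(T))$, where $A(T):=\max\{|T\cap F|-k\,r(F):\ F\text{ an }\cM\text{-flat},\ e\in F\}$ and $C(T):=\max\{|T\cap F|-k\,r(F):\ F\text{ an }\cM\text{-flat},\ e\notin F\}$; moreover $e\in\Span_{\cM^k}(T)\iff A(T)\ge C(T)$. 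So with $R':=R\setminus\{e\}$ I need $\Pr[A(R')\ge C(R')]\le O(\eps)$. The point of the scaling is that every individual flat is strictly sub-critical in expectation: $\E[\,|R'\cap F|-k\,r(F)\,]\le x(F)-k\,r(F)\le-\eps\,k\,r(F)\le-\eps k$ for each flat with $r(F)\ge1$. So morally the ``extra'' a flat through $e$ buys over a comparable flat missing $e$ should, after accounting for fluctuations, remain negative, and $A(R')$ should fall short of $C(R')$ except with probability $O(\eps)$ --- indeed in the partition-matroid special case $A(R')-C(R')$ collapses to the single quantity $|R'\cap F_e|-k$ for $F_e$ the rank-one flat on $e$, a pure Chernoff event.

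Turning this intuition into a proof is the crux, and the technical heart of the paper. Two standard tools are unavailable. First, $A$ and $C$ are maxima over \emph{exponentially many} flats, so one cannot union-bound over flats: that would need $\eps^2 k$ to exceed the logarithm of the number of $\cM$-flats, far more than the $\log k$ that $\eps=\Theta(\sqrt{\log k/k})$ affords. Second, although $A$, $C$, and $D$ are all monotone and $1$-Lipschitz in $T$, they are \emph{not} even approximately self-bounding: a maximizing flat $F$ has $|T\cap F|=(\text{value})+k\,r(F)$, so the sum of the per-coordinate sensitivities of (say) $A$ can be of order $k\,r(F)$ while $A$ itself is small --- which defeats the self-bounding concentration of Boucheron--Lugosi--Massart --- and McDiarmid's bounded-differences inequality is far too weak here, with a variance proxy of order $|E|$ rather than of order $k$. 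This is precisely the situation our new \emph{bicriterion} concentration inequality for arbitrary monotone $1$-Lipschitz functions of independent items is built for: it delivers Chernoff-strength upper-tail control --- deviation probability $\exp(-\Omega(\eps^2 k))$ at the scale $k$ set by the per-flat slack --- at the price of comparing the function at the reduced rate $x=(1-\eps)x^\ast$ against its typical value at the original rate $x^\ast$ (this two-rate trade-off being the ``bicriterion'' feature). Feeding $A$ (and the complementary deficiency $C$) into this inequality shows that $A(R')$ reaches $C(R')$ only with probability $\exp(-\Omega(\eps^2 k))=1/\poly(k)$ once $\eps=\Theta(\sqrt{\log k/k})$ --- the least $\eps$ making this tail $O(\eps)$ --- which yields the displayed bound and hence the theorem. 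The main obstacle is thus concentrated entirely in this last step: proving the bicriterion inequality and checking that $A$, despite not being self-bounding, can be driven through it at the scale $k$.
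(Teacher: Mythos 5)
Your high-level plan agrees with the paper in two respects: reduce prophet inequality to an OCRS (the paper's \autoref{lem:ocrs-implies-prophet}), and close the argument with the bicriterion concentration inequality (the paper's \autoref{thm:newconcentration}). The combinatorial reformulation via the matroid-union rank formula and the $A(T)\ge C(T)$ criterion is correct and is essentially the same object as the paper's occupancy function $\occ_e$. But there is a genuine gap in the middle, and it is not the ``last step'' you flag at the end.

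The gap: you run the plain greedy OCRS on $\cM^k$ after a single $(1-\eps)$ scaling, and you then need $\Pr\bigl[e\in\Span_{\cM^k}(R'\bigr]\le O(\eps)$, i.e.\ $\Pr[A(R')\ge C(R')]\le O(\eps)$, for \emph{every} element $e$. To get this from the concentration inequality you must first establish that the \emph{mean} of the relevant deficiency/occupancy quantity is bounded sufficiently far from the critical level. You argue (correctly) that each individual flat $F$ has expected slack $\E[|R'\cap F|-k\,r(F)]\le -\eps k$, but that controls one flat, not the expectation of the maximum $A(R')=\max_{F\ni e}(|R'\cap F|-k\,r(F))$. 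Since $A$ is a maximum over exponentially many flats, $\E[A(R')]$ can sit much higher than any single term, and no union bound is available (you note this yourself). The concentration inequality $\Pr[f(\vb{X}^{(s)})\ge\E[f(\vb{X})]+t]\le e^{-st}$ is useless if you cannot control $\E[f(\vb{X})]$; it does not manufacture a mean bound. The paper explicitly states that plain greedy ``could perform very poorly'' on $\cM^k$ and spends a whole subsection (\autoref{subsubsec:chain-decomposition}) constructing a chain decomposition via the iterative $\textproc{KFoldProtect}$ procedure precisely to enforce the expectation bound $\E[\occ_e(R(\vb{x_*})\cup N_{j+1})]\le bk$ for the elements not protected at each layer, with \autoref{lem:k-fold-protection-is-proper} guaranteeing the protection set is proper so the recursion terminates. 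Your proposal skips this entirely, so the needed expectation bound is simply asserted, not proved.

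A secondary mismatch: the paper actually uses two scalings --- it first puts $\vb{x_*}\in b\cdot\cP_{\cM^k_*}$ (which is where the chain decomposition and the $bk$ bound on the expected occupancy live), and then applies an \emph{additional} multiplicative $e^{-(1-b)}$ inside the OCRS so that the concentration inequality compares the occupancy at rate $e^{-(1-b)}\vb{x_*}$ against its mean at rate $\vb{x_*}$ (this is exactly the bicriterion shape of \autoref{thm:newconcentration}). Your single $(1-\eps)$ scaling conflates the role of the ``outer'' scaling that gives the slack budget with the ``inner'' scaling that makes the bicriterion inequality applicable; even if you had the expectation bound, the derivation would need to be rewritten to separate these two. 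Finally, the paper does all of this on the \emph{extended} $k$-fold union (\autoref{def:extended-k-fold-union}), duplicating each element into $k$ parallel copies, in order for the occupancy function and the recursion on restrictions to be well-defined and closed; your proposal works directly on $\cM^k$, which would at minimum require re-deriving the needed closure properties. In short: the combinatorics and the intended use of the concentration inequality are on the right track, but the protection/chain-decomposition layer is a missing load-bearing idea, not a detail.
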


Our proof of \autoref{thm:kfoldunion} follows from a novel Online Contention Resolution Scheme (OCRS). An OCRS is parameterized by a ground set $E$, a feasibility family $\mathcal{F}$, and a vector of probabilities $\vb{x} \in \conv(\{\mathbf{1}_{F} : F \in \mathcal{F}\}) \subseteq [0,1]^E$ (that is, $\vb{x}$ can be written as a convex combination of indicator vectors of feasible sets). One at a time, elements of $E$ are revealed and \emph{active} with probability $x_e$ independently. If an element is active, it can be accepted or rejected (if inactive, it must be rejected), and the accepted elements must at all times be in $\mathcal{F}$. An OCRS is $c$-selectable if every element $e$ is accepted with probability at least $c\cdot x_e$. In this language, \autoref{thm:kfoldunion} follows from a novel $(1-O(\sqrt{\frac{\log k}{k}}))$-selectable OCRS for $k$-fold matroid unions.

To prove our OCRS, we follow a similar framework as~\cite{FSZ16}, and design a recursive decomposition of $\mathcal{F}$ over which to greedily accept active elements. There are two key challenges to applying their framework, which we overview in greater detail in \autoref{subsec:overview-our-ocrs}. We give a representative example below. 

Applied to the $1$-uniform matroid, the~\cite{FSZ16} algorithm simply proposes ``accept any active element independently with probability $b$.'' Then, linearity of expectation suffices to observe that there are at most $b$ elements in expectation that are both active and accepted,\footnote{There are at most $1$ elements in expectation that are active, and each active element is accepted with probability $b$.} and Markov's inequality suffices to guarantee that with probability at least $1-b$, no elements are accepted at all. This suffices to guarantee that for all $e$: (a) with probability at least $1-b$ it is feasible to accept $e$ when revealed, and (b) independently, we will accept $e$ with probability $b$ conditioned on $e$ being active and feasible. This implies a $b(1-b)$-selectable algorithm, which is optimized at $b=\frac{1}{2}$. 

Applied to the $k$-uniform matroid, a natural algorithm would again be ``accept any active element independently with probability $b$.'' Then, linearity of expectation still suffices to observe that there are at most $bk$ elements in expectation that are both active and accepted, but Markov's inequality only guarantees that with probability at least $1-b$, at most $k$ elements are accepted. This would lead to the same $\frac{1}{4}$-selectable OCRS, which is not the desired $1-O(\sqrt{\frac{\log k}{k}})$. Of course, the obvious fix is to use a significantly stronger concentration inequality than Markov's. E.g., a Chernoff bound suffices to guarantee that with probability at least $1-\frac{1}{k}$ at most $k-1$ elements are accepted, when $b = 1-O(\sqrt{\frac{\log k}{k}})$. This leads to the desired $1-O(\sqrt{\frac{\log k}{k}})$ selectable OCRS for \emph{$k$-uniform} matroids. However, Chernoff bounds are insufficient for the general class of $k$-fold matroid unions -- the probability that a particular element is feasible to accept is a highly combinatorial function that depends on the underlying matroid structure. Thus our
\autoref{thm:kfoldunion} has two components: first, a decomposition that reduces the OCRS problem to a concentration inequality and second, a novel concentration inequality, which is our third main result.

\paragraph*{Main Result III: A Bicriterion Concentration Inequality.} Putting aside prophet inequalities for a moment, concentration inequalities are a core aspect of applied probability with widespread application across many areas of Computer Science. One representative setting is the following: Let $f:\{0,1\}^E \rightarrow \mathbb{R}$ be some function, and let $\vb{X} = \langle X_e\rangle_{e \in E}$ be a vector of independent Bernoulli random variables, where $X_e \sim \text{Ber}(p_e)$. A canonical question asks: what is the probability that $f(\vb{X})$ exceeds $\E[f(\vb{X})]+t$?

On one extreme, McDiarmid's inequality holds whenever $f$ is $1$-Lipschitz. On the other, Chernoff bounds are \emph{significantly} stronger, if $f$ is linear (and $1$-Lipschitz). In between, ``Chernoff-strength'' concentration holds whenever $f$ is fractionally-subadditive or (approximately) self-bounding~\cite{BLM00,BLM03,MR06,BLM09,Von10}, but this provably does not extend even to the case when $f$ is subadditive~\cite{Von10}. 

Our third main result provides a \emph{bicriterion concentration inequality} for any monotone $1$-Lipschitz function. Specifically, if $\vb{X}$ is a vector of Bernoulli random variables with probability vector $\vb{p}$, let $\vb{X}^{(s)}$ denote a vector of Bernoulli random variables with probability vector $e^{-s}\vb{p}$. That is, each probability $p_i$ has been decreased by a factor of $e^{-s}$. Our new concentration inequality establishes:

\begin{restatable}{theorem}{restateconcentration}
\label{thm:newconcentration}
Let $f: \{0,1\}^E \to \mathbb{R}$ be a monotone $1$-Lipschitz function. For any $s \in (0,1]$, $t > 0$:
\[\Pr\left[f(\vb{X}^{(s)}) \geq \E[f(\vb{X})]+t\right] \leq e^{-st}.\]
\end{restatable}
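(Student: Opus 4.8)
The plan is to pass to exponential moments and then induct on the ground set (I assume $E$ finite, as in the application).

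\textbf{Step 1: reduction to a shifted moment bound.} Since $x\mapsto e^{sx}$ is increasing and nonnegative, Markov's inequality gives
\[
  \Pr\!\left[f(\vb{X}^{(s)})\ge\E[f(\vb{X})]+t\right]\;\le\;e^{-s\left(\E[f(\vb{X})]+t\right)}\,\E\!\left[e^{sf(\vb{X}^{(s)})}\right],
\]
so it suffices to prove the shifted moment-generating bound $\E\!\left[e^{sf(\vb{X}^{(s)})}\right]\le e^{s\,\E[f(\vb{X})]}$, after which the factor $e^{-st}$ falls out. (This is the reverse of what Jensen's inequality yields, so the monotonicity and Lipschitz hypotheses must do the work.)

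\textbf{Step 2: induction, reduced to a one-variable inequality.} Fix $e_1\in E$, and for $b\in\{0,1\}$ let $f_b\colon\{0,1\}^{E\setminus\{e_1\}}\to\R$ be the restriction of $f$ with coordinate $e_1$ frozen to $b$; each $f_b$ is again monotone and $1$-Lipschitz. Writing $\vb{X}_{-e_1}$ for $\vb{X}$ restricted to $E\setminus\{e_1\}$ and $a_b\eqdef\E[f_b(\vb{X}_{-e_1})]$, we have $\E[f(\vb{X})]=(1-p_{e_1})a_0+p_{e_1}a_1$. Conditioning on $X^{(s)}_{e_1}\sim\Ber(e^{-s}p_{e_1})$ (independent of the remaining coordinates) and applying the inductive hypothesis to $f_0$ and $f_1$ yields
\[
  \E\!\left[e^{sf(\vb{X}^{(s)})}\right]\;\le\;(1-e^{-s}p_{e_1})\,e^{sa_0}+e^{-s}p_{e_1}\,e^{sa_1}.
\]
Dividing by $e^{sa_0}$ and writing $p\eqdef p_{e_1}\in[0,1]$, $\Delta\eqdef a_1-a_0$, the inductive step closes precisely when
\[
  1+e^{-s}p\,(e^{s\Delta}-1)\;\le\;e^{sp\Delta}.\qquad(\star)
\]

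\textbf{Step 3: proving $(\star)$ --- where the hypotheses enter.} Monotonicity of $f$ gives $0\le f_1-f_0$, hence $\Delta\ge0$; $1$-Lipschitzness gives $f_1-f_0\le1$, hence $\Delta\le1$. Consequently $e^{-s}\le e^{-s\Delta}$, so the left side of $(\star)$ is at most $1+p\,(1-e^{-s\Delta})=1+p-pe^{-s\Delta}$, and it remains to verify $e^{sp\Delta}\ge1+p-pe^{-s\Delta}$ for $p\in[0,1]$, $s>0$, $\Delta\ge0$: both sides equal $1$ at $\Delta=0$, while in $\Delta$ the right side has derivative $spe^{-s\Delta}\le sp\le spe^{sp\Delta}$, the derivative of the left. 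The base case $|E|=0$ ($f$ a constant) is immediate. I expect $(\star)$ --- equivalently, recognizing that the exact bound $\Delta\le1$ from $1$-Lipschitzness is what upgrades the crude discount $e^{-s}$ to the $e^{-s\Delta}$ needed to dominate $e^{sp\Delta}$ --- to be essentially the entire content; note that $(\star)$ is false when $\Delta<0$ (e.g.\ $p=1$, $s=1$, $\Delta=-1$), so monotonicity cannot be dropped.
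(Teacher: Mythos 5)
Your proof is correct, and it takes a genuinely different route from the paper. Both arguments reduce to the same intermediate claim -- the shifted MGF bound $\E[e^{sf(\vb{X}^{(s)})}]\le e^{s\E[f(\vb{X})]}$, followed by Markov -- but they establish it by very different means. The paper applies Massart's modified logarithmic Sobolev inequality to $Z^{(\lambda)}=f(\vb{X}^{(\lambda)})$, computes $F'(\lambda)$ for the unconventional function $F(\lambda)=\E[e^{\lambda Z^{(\lambda)}}]$ (in which the exponent parameter and the scaling factor are coupled), combines these to obtain the differential inequality $\lambda F'(\lambda)\le F(\lambda)\log F(\lambda)$, and solves that ODE. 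You instead tensorize directly: induct on $|E|$, condition on one coordinate, and reduce to the elementary one-variable inequality $(\star)$, which you verify by a first-order comparison after using $\Delta\in[0,1]$ (monotonicity and $1$-Lipschitzness) to replace $e^{-s}$ by $e^{-s\Delta}$. Your approach is noticeably more elementary -- no log-Sobolev, no entropy method, no differential inequality -- and has the minor bonus that nothing in your Step~3 uses $s\le 1$, so your argument actually establishes the bound for all $s>0$; the paper's proof genuinely needs $s\le 1$ because Fact~\ref{fact:phi-by-lambda} ($\phi(-\lambda)/\lambda\le\lambda e^{-\lambda}$) fails for $\lambda>1$. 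What the paper's route buys in exchange is a presentation cleanly aligned with the self-bounding-function literature, which makes the contrast with the (approximately) self-bounding case transparent and situates the result among existing entropy-method bounds. Your counterexample showing $(\star)$ fails for $\Delta<0$ is a nice sanity check that monotonicity is essential, exactly as the paper also emphasizes.
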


A helpful comparison point is McDiarmid's inequality, which instead proves the following: $\Pr\left[f(\vb{X}) \geq \E[f(\vb{X})]+t\right] \leq e^{-2t^2/|E|}$. The distinctions are: (a) our concentration inequality is bicriterion -- we analyze $f(\vb{X}^{(s)})$ instead of $f(\vb{X})$, and (b) our concentration has an exponent of $-st$ instead of $-2t^2/|E|$. In particular, McDiarmid's inequality depends on the dimension $|E|$ and cannot possibly kick in for $t \ll \sqrt{|E|}$, whereas our concentration inequality can kick in for any $t > 1/s$. A representative example to have in mind might be $s = \sqrt{\log(1/\varepsilon)/\E[f(\vb{X})]}$ and $t = \sqrt{\log(1/\varepsilon)\E[f(\vb{X})]}$. This results in a tail probability of $\varepsilon$ for exceeding $\E[f(\vb{X})]$ by $\sqrt{\log(1/\varepsilon)}$ multiples of $\sqrt{\E[f(\vb{X})]}$, which is ``Chernoff-strength''. \emph{But}, this concentration holds only for $f(\vb{X}^{(s)})$, rather than $f(\vb{X})$. This suffices for our application.

To prove \autoref{thm:newconcentration}, we utilize the entropy method for self-bounding functions~\cite{BLM00,BLM03,MR06,BLM09} in an unconventional way. 
We give a more detailed technical overview in \autoref{subsec:concentration-technical-overview}.

\subsection{Related Work}
There are three strands of related work: prophet inequalities, concentration inequalities, and attempts to generalize $k$-uniform guarantees.

\paragraph*{Prophet Inequalities.} Prophet inequalities have a long history in Mathematics, Computer Science, and Operations Research. Representative results include Krengel and Sucheston's initial $\frac{1}{2}$-approximation~\cite{KrengelS78}, Samuel-Cahn's elegant thresholding strategy~\cite{Samuel-Cahn84}, Chawla et al.'s connection to Bayesian mechanism design~\cite{ChawlaHMS10}, Kleinberg and Weinberg's extension to matroids~\cite{KleinbergW12}, and Dutting et al.'s connection to Price of Anarchy~\cite{DuttingFKL20}. 

Of particular relevance to our work are prophet inequalities for $k$-uniform matroids. The first $1-O(\sqrt{\frac{\log k}{k}})$ approximation was developed by~\cite{HajiaghayiKS07}, and the first asymptotically tight $1-O(\frac{1}{\sqrt{k}})$ approximation was developed by~\cite{Alaei14}. Subsequent works achieve the same $1-O(\frac{1}{\sqrt{k}})$ approximation with sample access~\cite{AzarKW14}, the optimal OCRS~\cite{JWZ22}, or a simpler OCRS~\cite{DW24}. 
The most technically related paper to our work is~\cite{FSZ16}, whose OCRS framework we leverage.
It remains an open question whether the prophet inequality for $k$-fold matroid unions can be improved to $1-O(\frac{1}{\sqrt{k}})$.

\paragraph*{Concentration Inequalities.} The most related concentration inequalities fit the same framework but consider different $f$. McDiarmid's inequality~\cite{mcdiarmid1989method} holds for all $1$-Lipschitz $f$, Schechtman's inequality holds for $f$ that are subadditive~\cite{Schechtman03}, Bucheron et al. derive an inequality for $f$ that are self-bounding functions~\cite{BLM00}, and Vondr\'{a}k derives an inequality for $f$ that are fractionally subadditive~\cite{Von10}. These inequalities are commonly used across Theoretical Computer Science, and especially within combinatorial prophet inequalities and Bayesian mechanism design~\cite{RubinsteinW15,RubinsteinS17}.

\paragraph*{Generalizing $k$-uniform matroids.} Recent work of~\cite{CSZ24} considers (offline) contention resolution and correlation gap inequalities. Here too, guarantees for $k$-uniform matroids are significantly stronger than what is achievable for arbitrary matroids. Their work similarly extends guarantees achievable for $k$-uniform matroids to $k$-fold matroid unions. In comparison to our work: (a) the general motivation is the same -- both works seek to extend stronger guarantees for $k$-uniform matroids to more general settings, (b) the problems studied and technical aspects are orthogonal,\footnote{While in principle, contention resolution and online contention resolution may appear similar, the relevant techniques are fundamentally different with little overlap. Similarly, while correlation gap inequalities are sometimes a useful tool in prophet inequalities, in this case there is no overlap.} (c) our work also proposes a bicriterion concentration inequality.

Another generalization of $k$-uniform matroids are \emph{packing constraints}, where each element has a $d$-dimensional size in $[0,1]^d$ and one can accept a subset of elements if their size vectors sum to at most $k$ in every coordinate.
Packing constraints have been studied in various online settings, including secretary model \cite{KTRV14}, prophet model \cite{AFNJM12}, and mixed model \cite{AGMS22}.

\section{Preliminaries}
\label{sec:prelims}
\paragraph*{Prophet Inequalities.}

In the \emph{prophet inequality} problem, we are given a ground set of elements $E$, a downward-closed family of feasible sets $\cF \subseteq 2^E$, and a distribution $\cD_e$ associated with each element $e \in E$. 
Elements arrive in an adversarial order.\footnote{There are various adversarial models. The weakest is the \emph{fixed-order adversary}, which sets the arrival order offline, based solely on the distributions. The strongest is the \emph{almighty adversary}, which sets the arrival order online, with full knowledge of all realizations of randomness and the algorithm's past decisions. Our negative result in \autoref{sec:largegirth} applies to fixed-order adversary, while our positive result in \autoref{sec:ocrs} holds against almighty adversary.} 
As each element $e$ arrives, its value $v_e$, independently drawn from $\cD_e$, is revealed.
At this point, an irrevocable decision must be made whether to include $e$ in its output $A$, while keeping $A \in \cF$.

For $c \in [0,1]$, we say an online algorithm implies a $c$-\emph{competitive} prophet inequality for $\cF$, if for any distributions $\{\cD_e\}_{e \in E}$, 
\[\E\left[\sum_{e \in A} v_e\right] \ge c \cdot \E\left[\max_{S \in \cF} \sum_{e \in S} v_e\right]\]
where the expectation is taken with respect to random variables $\{v_e\}_{e \in E}$ and the internal randomness of the algorithm.

\paragraph*{Online Contention Resolution Schemes.} 
Given a ground set of elements $E$ and a downward-closed family of feasible sets $\cF \subseteq 2^E$, we define the polytope of $\cF$ as the convex hull of all characteristic vectors of feasible sets, i.e., $\cP_\cF = \conv(\{\mathbf{1}_F : F \in \cF\}) \subseteq [0,1]^E$.

An \emph{online contention resolution scheme} (OCRS) takes a vector $\vb{x} \in \cP_\cF$ as input. Let $R(\vb{x}) \subseteq E$ be a random set where each element $e \in E$ is in $R(\vb{x})$ independently with probability $x_e$. The OCRS sees membership in $R(\vb{x})$ of elements in $E$, arriving in an adversarial order; when each $e \in E$ arrives, if $e \in R(\vb{x})$ (i.e., $e$ is ``active''), the scheme must decide irrevocably whether to include $e$ in its output $A$, while keeping $A \in \cF$.

For $c \in [0,1]$, an OCRS is called $c$-\emph{selectable} for $\cF$, if for any $\vb{x} \in \cP_\cF$, 
\[\Pr[e \in A \mid e \in R(\vb{x})] \ge c ~~~~\forall e \in E\]
where $A\in \cF$ is the output of the OCRS, and the probability is measured with respect to $R(\vb{x})$ and internal randomness of the OCRS.
As shown in \cite{FSZ16}, a $c$-selectable OCRS directly implies a $c$-competitive prophet inequality.

\begin{lemma}[\cite{FSZ16}]
\label{lem:ocrs-implies-prophet}
    For a ground set $E$ and a family of feasible sets $\cF \subseteq 2^E$, a $c$-selectable OCRS for $\cF$ implies a $c$-competitive prophet inequality for $\cF$.
\end{lemma}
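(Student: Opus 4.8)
The plan is to run the standard OCRS$\to$prophet reduction: from the prophet's offline optimum we extract a point $\vb{x}\in\cP_\cF$ together with a coupling between the realized values and a product ``activation'' process matching $R(\vb{x})$, then we run the given $c$-selectable OCRS on $\vb{x}$ and have the online algorithm accept exactly the elements the OCRS outputs. Feasibility of the resulting strategy is immediate ($A\in\cF$ at all times because the OCRS guarantees it); the work is in choosing the coupling so that the per-element selectability guarantee can be ``multiplied'' against the value $V_e$.

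In detail: assume w.l.o.g.\ that all values are nonnegative (truncating each $\cD_e$ at $0$ changes neither $\OPT$ nor the optimal online reward). Fix an optimal offline rule, i.e.\ a map $S^\star$ sending a realization $\vb{v}=(v_e)_{e\in E}$ to some $S^\star(\vb{v})\in\argmax_{S\in\cF}\sum_{e\in S}v_e$, so that $\OPT=\E[\max_{S\in\cF}\sum_{e\in S}v_e]=\E[\sum_{e\in S^\star(\vb{V})}V_e]$ where $V_e\sim\cD_e$ independently. Put $x_e:=\Pr[e\in S^\star(\vb{V})]$; since $\vb{x}=\E[\mathbf{1}_{S^\star(\vb{V})}]$ is an average of indicator vectors of feasible sets and $\cP_\cF$ is closed and convex, $\vb{x}\in\cP_\cF$, so it is a legal OCRS input. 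Now build the coupling: after $V_e$ is revealed, declare $e$ \emph{active} independently with probability $p_e(V_e)$, where $p_e(v):=\Pr[e\in S^\star(\vb{V})\mid V_e=v]$ (well defined, as $V_e$ is independent of $(V_f)_{f\ne e}$), and let $R$ be the active set. Then $\Pr[e\in R]=\E[p_e(V_e)]=x_e$, and since membership of $e$ in $R$ is a function only of $(V_e,\mathrm{coin}_e)$ — pairs independent across $e$ — the set $R$ has exactly the law $R(\vb{x})$ the OCRS expects. Two facts fall out: (i) $\Pr[e\in R\mid V_e=v]=p_e(v)$, so $\E[V_e\mathbf{1}_{e\in R}]=\E[V_e\mathbf{1}_{e\in S^\star(\vb{V})}]$ and hence $\sum_e\E[V_e\mathbf{1}_{e\in R}]=\OPT$; (ii) conditioned on $e\in R$, the value $V_e$ is independent of everything else the OCRS sees (the other coordinates of $R$, its internal coins, the arrival order), so $V_e$ is conditionally independent of the event $\{e\in A\}$.

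Running the OCRS on $(\vb{x},R)$ and accepting $e$ iff $e\in A$, we get for every $e$, using $A\subseteq R$ and fact (ii), $\E[V_e\mathbf{1}_{e\in A}]=\Pr[e\in R]\cdot\E[V_e\mid e\in R]\cdot\Pr[e\in A\mid e\in R]$, which by $c$-selectability ($\Pr[e\in A\mid e\in R]\ge c$) and $V_e\ge 0$ is at least $c\cdot\Pr[e\in R]\cdot\E[V_e\mid e\in R]=c\cdot\E[V_e\mathbf{1}_{e\in R}]$. Summing over $e$ and using fact (i),
\[\E\Big[\sum_{e\in A}V_e\Big]\ \ge\ c\sum_{e\in E}\E[V_e\mathbf{1}_{e\in R}]\ =\ c\cdot\OPT,\]
which is the claimed $c$-competitive prophet inequality. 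The step I expect to take the most care is setting up the coupling so that simultaneously (a) $R$ reproduces the OCRS input distribution exactly and (b) the OCRS's decision on $e$ is independent of $V_e$ given $e\in R$: property (b) is precisely what lets selectability be applied ``per element'' without interference from the value, and everything else is linearity of expectation and the definition of $\cP_\cF$. (For continuous $\cD_e$ the $p_e(v)$ are conditional probabilities handled in the usual measure-theoretic way, or one discretizes and takes limits.)
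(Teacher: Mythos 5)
Your proof is correct and is exactly the coupling argument behind the OCRS-to-prophet reduction of~\cite{FSZ16}; the paper invokes this lemma by citation and does not reproduce a proof, so there is no competing argument in the text to compare against. Your choice to activate $e$ with probability $p_e(V_e)=\Pr[e\in S^\star(\vb{V})\mid V_e]$ rather than setting $R=S^\star(\vb{V})$ outright is the right move: it simultaneously makes the coordinates of $R$ independent and preserves $\E[V_e\mathbf{1}_{e\in R}]=\E[V_e\mathbf{1}_{e\in S^\star(\vb{V})}]$, which is what lets selectability be applied per element. The only place that warrants an extra word is your fact~(ii), where you list ``the arrival order'' among the quantities conditionally independent of $V_e$ given $e\in R$: that is valid for a value-oblivious (e.g., fixed-order) adversary, but not verbatim for the almighty adversary the paper alludes to in its footnote. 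For that stronger model one additionally uses that the schemes in question are \emph{greedy} OCRSs in the sense of~\cite{FSZ16}, for which the selectability guarantee holds after conditioning on the entire realized active set $R$ (and for every order, adaptive or not); one then conditions on $R$ rather than just the event $\{e\in R\}$ and the same chain of (in)equalities goes through.
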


\paragraph*{Matroids.}
A \emph{matroid} $\cM=(E,\cI)$ is defined by a ground set of elements $E$ and a non-empty downward-closed family of independent sets $\cI \subseteq 2^E$ with the \emph{exchange property}, i.e., for every $A,B \in \cI$ where $|A| > |B|$, there exists an element $e \in A \setminus B$ such that $B \cup \{e\} \in \cI$. 
Given a matroid $\cM=(E,\cI)$, the following notations are used throughout the paper:
\begin{itemize}
    \item The \emph{rank} of a set $S \subseteq E$ is the size of the largest independent set contained in $S$: $\rank(S)=\max\{ |I| :  I \subseteq S, I \in \cI\}$.
    \item The \emph{span} of a set $S \subseteq E$ is the set of elements that is not independent from $S$: $\Span(S)=\{ e \in E : \rank(S) = \rank(S \cup \{e\})\}$.
    \item The \emph{restriction} of $\cM$ to a set $S \subseteq E$ is a matroid $\cM |_{S} = (S, \cI|_{S}) = (S, \{I \in \cI : I \subseteq S\})$.
    \item The \emph{girth} of $\cM$ is the size of the smallest dependent set: $\girth(\cM)=\min\{ |S| : S \subseteq E, S \notin \cI\}$.
\end{itemize}

Following are some special matroids that we will use later.

\begin{example}[Uniform matroid]
    A \emph{$k$-uniform matroid} $\cM=(E,\cI)$ is a matroid in which the independent sets are exactly the sets that contains at most $k$ elements for an integer $k\ge 1$, i.e, $\cI=\{I \subseteq E : |I|\le k\}$.
\end{example}

\begin{example}[Graphical matroid]
    A \emph{graphical matroid} $\cM=(E,\cI)$ is a matroid in which the independent sets are the forests in a given undirected graph $G=(V,E)$, i.e., $\cI = \{I \subseteq E : I~\text{is acyclic in}~G\}$.
\end{example}

We formally define $k$-fold matroid union as follows.

\begin{definition}[$k$-fold matroid union]
Given a matroid $\cM=(E,\cI)$ and an integer $k\ge 1$, the \emph{$k$-fold union} of $\cM$ is defined as $\cM^k=\underbrace{\cM \lor \cM \lor \cdots \lor \cM}_{\text{$k$ times}} = (E,\cI^k)$ where \[\cI^k=\{ I_1 \cup I_2 \cup\cdots \cup I_k : I_1, I_2,\dots,I_k \in \cI\}.\]
In other words, a set $I$ is independent in $\cM^k$ if and only if $I$ can be partitioned into at most $k$ independent sets in $\cM$. Note that $\cM^k$ remains a matroid by the closure property of matroid union.
\end{definition}

\section{Large Girth is Not Sufficient}
\label{sec:largegirth}
In this section, we prove that a large girth is not sufficient for matroids (specifically, graphical matroids) to have a prophet inequality with a competitive ratio better than $\frac{1}{2}$.

\restategirth*

To construct a hard instance, we start with a dense graph of large girth. We then transform the graph by splitting each edge $(v_1,v_2)$ into two edges $(v_1,u)$ and $(v_2,u)$, where $u$ is a newly introduced vertex.
We obtain the final hard instance of the prophet inequality problem
by embedding the hard instance of the single-item case into each of these edge pairs $(v_1,u), (v_2,u)$.

The hardness of this instance arises from the following observation: without accepting both edges in a pair, the instance essentially reduces to $|E|$ independent hard instances of the single-item case.
On the other hand, one can accept at most $|V|-1$ extra pairs of edges (in addition to $|E|$ single-item problems) at the same time without forming a cycle, which could not contribute a lot to the final solution because the graph is dense.

\begin{proof}[Proof of \autoref{thm:girth}]

We employ a construction of \cite{LUW95} which provides dense graphs of large girth. In particular, we will use that for any fixed $k$ there exists some arbitrarily large $n$ such that there is a graph $G_n$ on $n$ vertices with at least $n \log n$ edges and girth at least $k$.\footnote{In fact, \cite{LUW95} prove a significantly stronger result, but the weaker version stated above suffices for our purposes.}
Specifically, consider the graph $G_n$ with vertices $V(G_n) = \{v_1, v_2, \ldots, v_n\}$ and edges $E(G_n) = \{e_1, e_2, \ldots, e_m\}$, where $m \ge n \log n$ and each edge $e_i=(a(i),b(i))\in E(G_n)$ connects vertices $a(i)$ and $b(i)$ in $V(G_n)$. We construct a new graph $H_n$ with $n+m$ vertices as follows:
\begin{itemize}
    \item Begin with a set of $n+m$ vertices, labeled $V(H_n) := \{w_1, w_2, \ldots, w_n\} \sqcup \{u_1, u_2, \ldots, u_m\}.$
    \item For each edge $e_i$ in $G_n$ connecting $v_{a(i)}$ and $v_{b(i)}$, add in $H_n$ an edge between $u_i$ and $w_{a(i)}$ (call it $f_i$) as well as an edge between $u_i$ and $w_{b(i)}$ (call it $f_i'$).  
\end{itemize}
Hence $H_n$ has a total of $2m$ edges. For $1 \le i \le m$, let the associated random variable $X_{f_i}$ of $f_i$ be a constant $1$, and let the associated random variable $X_{f_i'}$ of $f_i'$ follow a distribution which takes a value of $\frac{1}{\varepsilon}$ with probability $\varepsilon$, and a value of $0$ with probability $1 - \varepsilon$. 
We consider an instance of the prophet inequality problem where the online algorithm is presented edges in the order $(f_1, f_1', f_2, f_2', \ldots, f_m, f_m')$.

We first lower bound $\OPT(H_n)$, the expected value the optimal offline algorithm gets on this instance. Note that an offline algorithm could simply look at each pair $\{f_i, f_i'\}$ and take whichever edge has higher realized weight; this cannot create a cycle because every edge selected will be incident to a vertex of degree 1. We hence have the bound 
\[\OPT(H_n) \ge \sum_{i=1}^m \left( \varepsilon \cdot \frac{1}{\varepsilon} + (1 - \varepsilon) \cdot 1 \right) = m (2 - \varepsilon).\]

Fix an online algorithm $\mathcal{A}$, and we now give an upper bound on its expected performance $\mathcal{A}(H_n)$ on the instance. The lower bound relies on the following observation.

\begin{claim}\label{claim:nplusoneedgescycle}
There are at most $n-1$ values of $i$ in $\{1, 2, \ldots, m\}$ such that $\mathcal{A}$ accepts both $f_i$ and $f_i'$.
\end{claim}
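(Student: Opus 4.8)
The plan is to exploit the local structure of $H_n$ around the new vertices: the \emph{only} edges of $H_n$ incident to $u_i$ are $f_i$ and $f_i'$, so accepting both of them amounts to using the two-edge path $w_{a(i)} - u_i - w_{b(i)}$, which behaves exactly like a single ``shortcut'' edge between $w_{a(i)}$ and $w_{b(i)}$ in a relabeled copy of $G_n$ on the vertex set $W := \{w_1,\dots,w_n\}$. Since $\mathcal{A}$ must keep its accepted set $A$ independent in the graphical matroid of $H_n$, i.e.\ acyclic, the collection of shortcut edges cannot contain a cycle; and a forest on the $n$ vertices of $W$ has at most $n-1$ edges.

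Concretely, let $I := \{\, i \in \{1,\dots,m\} : f_i, f_i' \in A \,\}$ and, for each $i \in I$, let $g_i := \{w_{a(i)}, w_{b(i)}\}$ be an edge on vertex set $W$. The key step is to show that the graph $(W, \{g_i : i \in I\})$ is acyclic. Suppose toward a contradiction it contained a simple cycle $g_{i_1}, g_{i_2}, \dots, g_{i_\ell}$ with the $i_j$ pairwise distinct, visiting the distinct $w$-vertices $v_1, v_2, \dots, v_\ell$ with $g_{i_j} = \{v_j, v_{j+1}\}$ (indices mod $\ell$). Lifting each $g_{i_j}$ to the path $v_j - u_{i_j} - v_{j+1}$ in $H_n$ and concatenating yields the closed walk $v_1 - u_{i_1} - v_2 - u_{i_2} - \cdots - v_\ell - u_{i_\ell} - v_1$, all of whose edges are $f_{i_j}$ or $f_{i_j}'$ and hence lie in $A$. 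The internal vertices $u_{i_1},\dots,u_{i_\ell}$ are pairwise distinct (distinct indices) and distinct from all $w$-vertices, so this walk is in fact a simple cycle contained in $A$ --- contradicting that $A$ is a forest. The same lifting argument rules out the degenerate cases (a loop $g_i$ with $a(i) = b(i)$, or two ``parallel'' shortcuts $g_i = g_j$ for $i \ne j$ in $I$), so the map $i \mapsto g_i$ is injective on $I$ and $(W, \{g_i : i \in I\})$ is genuinely a forest.

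A forest on $|W| = n$ vertices has at most $n-1$ edges, and by injectivity this forces $|I| \le n-1$, which is exactly the claim. I expect the only mildly delicate point to be the bookkeeping in the lifting argument --- checking that concatenating the two-edge paths along a \emph{simple} cycle of $(W,\{g_i\})$ produces a \emph{simple} cycle in $H_n$ rather than merely a closed walk --- which is precisely where we use that each $u_i$ has degree exactly $2$ in $H_n$ and so cannot be revisited.
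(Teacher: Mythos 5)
Your proposal is correct and follows essentially the same approach as the paper: identify each accepted pair $\{f_i, f_i'\}$ with the original edge $e_i$ of $G_n$ (equivalently your shortcut edge $g_i$ on $W$), observe that a cycle among these original edges would lift to a cycle in $H_n$ by expanding each edge into the two-edge path through $u_i$, and conclude that the shortcut edges form a forest on $n$ vertices and hence number at most $n-1$. The paper phrases this contrapositively (assume $n$ pairs, find a cycle) and is terser about verifying the lifted cycle is simple; your version spells out that detail and the degenerate cases, but the underlying argument is the same.
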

\begin{proof}
Suppose there are at least $n$ such values of $i$; call them $i_1$, $i_2$, $\ldots$, $i_n$. As the original graph $G$ has $n$ vertices, and a forest on $n$ vertices has at most $n-1$ edges, we clearly see that there is a cycle among $\{e_{i_1}, e_{i_2}, \ldots, e_{i_n}\}$. That however would imply there is a cycle in $H$; namely, follow the cycle that existed in $G$, but replace each edge $e_i$ with the edge $f_i$ followed by the edge $f_i'$.
\end{proof}

For each $1 \le i \le m$, we now consider cases for what $\mathcal{A}$ gets in expectation from $\{f_i, f_i'\}$ right after $f_i$ arrives:
\begin{itemize}
    \item If $\mathcal{A}$ rejects $f_i$, then it clearly gets in expectation at most $1$ from $\{f_i, f_i'\}$ because $\mathbb{E}[X_{f_i'}] = 1$.
    \item If $\mathcal{A}$ accepts $f_i$ and rejects $f_i'$, then it clearly gets weight at most $1$ from $\{f_i, f_i'\}$.
    \item If $\mathcal{A}$ accepts $f_i$ and accepts $f_i'$, then it clearly gets weight at most $1 + \frac{1}{\varepsilon}$ from $\{f_i, f_i'\}$.
\end{itemize}

Let $C_1$ denote the set of all $i \in [m]$ such that $\mathcal{A}$ rejects $f_i$, let $C_2$ denote the set of all $i \in [m]$ such that $\mathcal{A}$ accepts $f_i$ and rejects $f_i'$, and let $C_3$ denote the set of all $i \in [m]$ such that $\mathcal{A}$ accepts $f_i$ and $f_i'$. Note $C_1$, $C_2$, and $C_3$ are random (disjoint) sets that may depend on the values realized by $\{X_{f_i}, X_{f'_i}\}_{i=1}^m$ and any randomness in $\mathcal{A}$. By the above cases, we can see that in expectation, $\mathcal{A}$ gets score at most 
\[\sum_{i \in C_1} 1 + \sum_{i \in C_2} 1 + \sum_{i \in C_3} \left( 1 + \frac{1}{\varepsilon} \right) = |C_1| + |C_2| + |C_3| \cdot \left( 1 + \frac{1}{\varepsilon} \right).\]
Although $|C_1|$, $|C_2|$, and $|C_3|$ are random variables, $|C_1| + |C_2| \le m$ always, and by \autoref{claim:nplusoneedgescycle} we have $|C_3| \le n-1$ always. Hence, in expectation (averaging over all possible realizations of $C_1$, $C_2$, and $C_3$), we can bound the performance of $\mathcal{A}$ on $H_n$ by $\mathcal{A}(H_n) \le m + n \left( 1 + \frac{1}{\varepsilon} \right).$ As $n$ grows, we can compute 
\[\liminf_{n \rightarrow \infty} \frac{\mathcal{A}(H_n)}{\OPT(H_n)} \le \lim_{n \rightarrow \infty} \frac{m + n \left( 1 + \frac{1}{\varepsilon} \right)}{m(2-\varepsilon)} = \frac{1}{2 - \varepsilon}.\]
Taking $\varepsilon \rightarrow 0$ demonstrates the claimed result. 
\end{proof}

\section{$k$-Fold Unions are Sufficient}
\label{sec:ocrs}
Our main goal in the section is to construct a good OCRS for $k$-fold matroid unions (\autoref{thm:ocrs-for-k-fold-union}). Combining with the reduction from prophet inequalities to OCRSs by \cite{FSZ16} (\autoref{lem:ocrs-implies-prophet}), this immediately implies the existence of good prophet inequality for all $k$-fold matroid unions (\autoref{thm:kfoldunion}).

\begin{theorem}
\label{thm:ocrs-for-k-fold-union}
There exists a $(1-O(\sqrt{\frac{\log k}{k}}))$-selectable OCRS for any $k$-fold matroid union $\cM^k$.
\end{theorem}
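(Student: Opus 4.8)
The plan is to follow the recursive-decomposition framework of \cite{FSZ16}, replacing their use of Markov's inequality with our bicriterion concentration inequality (\autoref{thm:newconcentration}). Fix $\vb{x} \in \cP_{\cF^k}$. Since $\cF^k$ is a $k$-fold matroid union, $\vb{x}/k$ (roughly) lies in the base polytope of $\cM$ up to scaling, so $\vb{x}$ decomposes as a sum of $k$ points, each in $\cP_{\cF}$; equivalently $\vb{x}$ can be written as a convex combination of indicators of sets independent in $\cM^k$, each of which partitions into $k$ sets independent in $\cM$. The algorithm I would analyze is the natural greedy one: shrink each activation probability by a factor $e^{-s}$ (equivalently, pre-discard each active element independently with probability $1-e^{-s}$), and then, when an element $e$ arrives and is still active, accept it if and only if doing so keeps the accepted set in $\cF^k$. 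The loss from the $e^{-s}$ shrink is a factor $e^{-s} = 1-O(s)$; we will take $s = \Theta(\sqrt{(\log k)/k})$, which is where the final $1-O(\sqrt{(\log k)/k})$ comes from.

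The core of the argument is to lower bound, for a fixed element $e$, the probability that $e$ is feasible to add at the moment it arrives, conditioned on $e$ being active. By matroid union / matroid intersection reasoning, a set $A$ together with $e$ fails to be independent in $\cM^k$ precisely when there is a ``bottleneck'' set $S \ni e$ with $|A \cap S| \geq k\cdot \rank_\cM(S) - (k-1)$ or, more usefully, when the rank of $A$ inside some restriction is already saturated. The key move is to express the relevant quantity — essentially $\rank_{\cM^k}$ of the accepted-so-far set restricted to a span — as a \emph{monotone $1$-Lipschitz function} $f$ of the independent activation bits of the other elements: adding one element to the ground set changes a rank by at most one. The expected value $\E[f(\vb{X})]$ (computed at the \emph{un-shrunk} probabilities $\vb{x}$) is at most $k\cdot(\text{rank of the relevant restriction}) - 1$, by the fractional feasibility $\vb{x} \in \cP_{\cF^k}$ and linearity of expectation over the $k$-fold decomposition. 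The algorithm evaluates $f$ at the \emph{shrunk} probabilities $e^{-s}\vb{x}$, i.e.\ at $\vb{X}^{(s)}$. Applying \autoref{thm:newconcentration} with $t$ chosen so that $\E[f(\vb{X})]+t$ is below the failure threshold — concretely $t = \Theta(\sqrt{k\log k})$, giving $e^{-st} = k^{-\Theta(1)}$ — shows that with probability $1-O(1/k)$ there is room to accept $e$. A union bound over the (at most polynomially many, after appropriate contraction/restriction reductions) relevant bottleneck sets, or a direct single application to the right rank function, then yields $\Pr[e \in A \mid e \text{ active}] \geq e^{-s}(1-O(1/k)) = 1-O(\sqrt{(\log k)/k})$.

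The main obstacle I anticipate is the first step: correctly identifying a \emph{single} monotone $1$-Lipschitz function $f$ of independent bits whose concentration controls feasibility of $e$, with the right expectation bound, uniformly over the adversarial arrival order and the conditioning on $e$'s history. The subtlety is that the set of elements arriving before $e$, and whether they were accepted, is itself correlated with the activation bits in a way the adversary controls; the \cite{FSZ16}-style fix is to observe that the greedy rule only ever \emph{rejects} an active element when it is spanned, so the accepted set dominates (in the matroid sense) a set that depends only on activation bits, letting us upper bound the bad event by an event of the form $f(\vb{X}^{(s)}) \geq \E[f(\vb{X})] + t$ with $f$ monotone and $1$-Lipschitz and with the conditioning on $e$ removed by monotonicity. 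Making this reduction airtight — in particular handling the recursive decomposition of $\cM^k$ so that the relevant $f$ genuinely has expectation at most $k\cdot\rank(\cdot) - 1$ rather than merely $k\cdot\rank(\cdot)$, which is what forces the $t = \Theta(\sqrt{k\log k})$ slack and hence the $\sqrt{(\log k)/k}$ loss — is where the real work lies; the concentration inequality itself is then applied as a black box.
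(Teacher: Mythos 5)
Your high-level plan is right — recursive decomposition plus the bicriterion concentration inequality — and you correctly identify the need for a monotone $1$-Lipschitz function whose large-deviation controls feasibility. But there is a genuine gap in where the slack in the expectation comes from, and it changes the algorithm.

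You claim $\E[f(\vb{X})] \le k\cdot\rank(\cdot) - 1$ follows ``by the fractional feasibility $\vb{x} \in \cP_{\cF^k}$ and linearity of expectation,'' and that this slack of $1$ is ``where the real work lies.'' This is both unattainable from fractional feasibility alone and, even if it held, far too weak: to conclude $\Pr[f(\vb{X}^{(s)})\ge k\cdot\rank]\le 1/k$ via \autoref{thm:newconcentration} you need $\E[f(\vb{X})]+t\le k\cdot\rank$ with $t=\Theta(\sqrt{k\log k})$, i.e.\ the expectation must sit $\Theta(\sqrt{k\log k})$ below the threshold, not just $1$ below. For a general $k$-fold matroid union there are elements whose expected occupancy under the active set alone is arbitrarily close to the threshold, so no amount of shrinking $\vb{x}$ by a constant factor will produce this slack uniformly over elements. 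This is exactly why the paper does \emph{not} analyze the natural greedy; instead it builds the slack by construction, via a chain decomposition with protection sets in the spirit of \cite{FSZ16}: at each level it adds to the protection set $S$ any element $e$ with $\E[\omega_e(R(\vb{x})\cup S)] > bk$ (with $b=1-\sqrt{\log k/k}$), and shows via a counting/rank argument (\autoref{lem:k-fold-protection-is-proper}) that this terminates with $S\subsetneq E$. The expectation bound $\E[\omega_e(R(\vb{x})\cup N_1)]\le bk$ is then an invariant enforced by the decomposition itself, not a consequence of $\vb{x}\in b\cdot\cP$. Your ``union bound over polynomially many bottleneck sets'' is a red herring; no union bound is needed.

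The second, smaller gap is that ``rank of the accepted-so-far set restricted to a span'' is not the right $f$, and the move to make it monotone and $1$-Lipschitz in independent bits is more delicate than you allow. The paper first passes to the \emph{extended} $k$-fold union $\cM^k_*$ (each element replaced by $k$ parallel copies), then defines the occupancy $\omega_e(S)=k-\rank(S\cup(\{e\}\times[k]))+\rank(S)$. This is where monotonicity and $1$-Lipschitzness come cleanly from submodularity of matroid rank (\autoref{lem:occupancy-function-basic-property}), and $\omega_e(S)<k$ certifies $(e,i)\notin\Span(S\setminus\{(e,i)\})$ (\autoref{lem:occupancy-function-good-indicator}). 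Your formulation neither introduces parallel copies nor pins down a specific rank-difference; with only the accepted set in play, the function is not a function of the activation bits at all. The paper's choice to evaluate $\omega_e(R(\cdot)\cup N_1)$ — the \emph{active} set unioned with the protection set — is what removes the adversary's influence and lets monotonicity take over. Once these two pieces (protection to create slack, occupancy function on the extended union as the right $f$) are in place, your invocation of \autoref{thm:newconcentration} with $s=\sqrt{\log k/k}$, $t=\sqrt{k\log k}$ goes through exactly as you intend.
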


Our OCRS for $k$-fold matroid unions builds on the chain decomposition approach used in the matroid OCRS by \cite{FSZ16}, outlined in \autoref{subsec:matroid-ocrs}.
We overview our approach and highlight main difficulties in \autoref{subsec:overview-our-ocrs}. The construction is then formally given and analyzed in \autoref{subsec:our-ocrs}, where the bicriterion concentration inequality in \autoref{sec:concentration} is used to bound its selectability.

\subsection{Recap: OCRS for general matroids}
\label{subsec:matroid-ocrs}

We briefly describe the idea of the $\frac{1}{4}$-selectable matroid OCRS by~\cite{FSZ16}. Specifically, they show that for any parameter $b \in (0,1)$, there exists a $(1-b)$-selectable OCRS for any matroid $\cM=(E,\cI)$ and $\vb{x}\in b\cdot \cP_\cM$.
Note that one can ``scale down'' a vector $\vb{x}$ from $\cP_\cM$ to $b \cdot \cP_\cM$ by only considering each element independently with probability $b$.
Formally:

\begin{fact}
\label{fact:shirnking}
For $b,c \in (0,1)$ and any matroid $\cM$, a $c$-selectable OCRS for all $\vb{x}\in b\cdot \cP_\cM$ implies a $bc$-selectable OCRS for all $\vb{x}\in \cP_{\cM}$.
\end{fact}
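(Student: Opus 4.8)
The plan is a simulation-and-thinning argument: reduce an instance $\vb{x}\in\cP_\cM$ to the scaled-down instance $\vb{x}':=b\vb{x}$, which lies in $b\cdot\cP_\cM$ by assumption, and run the given $c$-selectable OCRS $\pi$ for $\vb{x}'$ on a suitably thinned version of the input stream. Concretely, I would build the desired OCRS for $\vb{x}$ as follows. Process elements in whatever adversarial (online) order is presented. When $e$ arrives: if $e\notin R(\vb{x})$, report $e$ as inactive to $\pi$; if $e\in R(\vb{x})$, draw a fresh independent coin that is heads with probability $b$, report $e$ as active to $\pi$ if heads and as inactive if tails. Accept $e$ exactly when $\pi$ accepts $e$.

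First I would verify that the stream $\pi$ perceives is distributed exactly as $R(\vb{x}')$: element $e$ is reported active to $\pi$ with probability $x_e\cdot b = x'_e$, and these events are mutually independent since the activity events $\{e\in R(\vb{x})\}$ and the thinning coins are all independent. Consequently $\pi$'s guarantees apply verbatim — its output $A$ is always in $\cI$ (so our scheme is a valid OCRS), and for every $e$ we have $\Pr[\,e\in A \mid e\text{ reported active to }\pi\,]\ge c$, with the adversarial ordering being harmless because $\pi$ already tolerates arbitrary orderings and we feed it the same one.

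Then I would chase the conditional probability. Since $e\in A$ forces $e$ to have been reported active to $\pi$, which in turn forces $e\in R(\vb{x})$, we can write
\[
\Pr[\,e\in A \mid e\in R(\vb{x})\,] = \Pr[\,e\text{ reported active to }\pi \mid e\in R(\vb{x})\,]\cdot \Pr[\,e\in A \mid e\text{ reported active to }\pi\,],
\]
where I have used that conditioning on ``$e\in R(\vb{x})$ and $e$ reported active to $\pi$'' is the same as conditioning on ``$e$ reported active to $\pi$''. The first factor equals $b$ (the thinning coin, independent of everything) and the second is $\ge c$ by selectability of $\pi$, so the product is $\ge bc$, as required. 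I do not expect any genuine obstacle here; the only points needing care are that the thinning coins are drawn independently of all other randomness (legitimate internal randomness of the new scheme), and that $\pi$'s robustness to the online adversary lets us pass the adversary's order through unchanged.
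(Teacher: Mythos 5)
Your proposal is correct and is exactly the argument the paper intends; the paper states this as a ``Fact'' with only the one-line hint that one can ``scale down'' $\vb{x}$ from $\cP_\cM$ to $b\cdot\cP_\cM$ by dropping each active element independently with probability $1-b$, and your thinning-and-simulation write-up is the standard fleshing-out of that hint. The conditional-probability chain and the independence of the fresh coins are handled correctly, and you rightly note that passing the adversary's order through unchanged is harmless because the inner OCRS is already robust to adversarial (indeed almighty-adversary) orderings.
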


Therefore, it follows that a $b(1-b)$-selectable ORCS exists for any matroid $\cM$ and $\vb{x} \in \cP_\cM$.
By letting $b=\frac{1}{2}$, they obtain a $\frac{1}{4}$-selectable matroid OCRS.

\paragraph*{The greedy algorithm.}

Let us start with the simple greedy algorithm that always accepts the active element whenever possible.
When $\cM$ is a $1$-uniform matroid, the greedy algorithm is actually $(1-b)$-selectable for $\vb{x} \in b\cdot \cP_\cM$ (i.e., $\sum_{e \in E} x_e \le b$ since $\cM$ is $1$-uniform), since the selectability of an element $e \in E$ can be easily lower bounded as
\begin{align*}
  \Pr[\text{$e$ is accepted} \mid \text{$e$ is active}] 
  &\ge \Pr[\text{no other element is active} \mid \text{$e$ is active}] \\
  &\ge \Pr[\text{no element is active}].
\end{align*}
The first inequality holds because when there is no active elements besides $e$, the greedy algorithm can always accept $e$ even if it arrives at the end.
The second inequality holds due to the independence between elements.
Moreover, by Markov's inequality,
\begin{align*}
  \Pr[\text{no element is active}] 
  &= 1-\Pr[|R(\vb{x})| \ge 1]  \\
  &\ge 1- \E[|R(\vb{x})|] = 1-\sum_{e \in E}x_e \ge 1- b.  
\end{align*}
(Recall that $R(\vb{x})$ is the set of active elements.)

The first half of argument applies when $\cM$ is a general matroid: for every element $e \in E$,
\[\Pr[\text{$e$ is accepted} \mid \text{$e$ is active}] \ge \Pr[e\notin \Span(R(\vb{x}))].\]
However, unlike in $1$-uniform matroids, the probability that an element $e \in E$ is spanned by active elements $R(\vb{x})$ could be much smaller than $1-b$, even for a scaled $\vb{x}\in b\cdot\cP_\cM$. In fact, the selectability of the greedy algorithm can be arbitrarily bad for a general matroid $\cM$ (see, e.g., \cite{LS18}).

\paragraph*{Protection.}
Consider \autoref{alg:modified-greedy}, a modified greedy algorithm with a \emph{protection set} $S \subsetneq E$ that only handles elements in $E \setminus S$. Intuitively, the algorithm accepts every active element $e \in E \setminus S$ whenever it does not conflict with any element in $S$. As a result, elements in $S$ are ``prioritized'' over those in $E \setminus S$: regardless of which independent set from $S$ is accepted, it remains an independent set when combined with the accepted elements in $E \setminus S$.

\begin{algorithm}[H]
\caption{Modified greedy algorithm for $\cM=(E,\cI)$ with a protection set $S \subseteq E$}
\label{alg:modified-greedy}
\begin{algorithmic}
\State $A \gets \emptyset$ \Comment{the set of accepted elements in $E \setminus S$}
\For{each arriving active element $e \in E \setminus S$}
    \If{$e \notin \Span(A \cup S)$}
        \State $A \gets A \cup \{e\}$ \Comment{accepts element $e$}
    \EndIf
\EndFor
\end{algorithmic}
\end{algorithm}

For the modified greedy algorithm, we can similarly lower bound the selectability for $e \in E \setminus S$:
\[\Pr[\text{$e$ is accepted} \mid \text{$e$ is active}] \ge \Pr[e\notin \Span(R(\vb{x})\cup S)].\]
The good news is that, such probabilities can be further lower bounded by $1-b$ for the $S$ obtained using \autoref{alg:find-protection}, an iterative algorithm that updates $S$ by adding an element $e$ whenever $\Pr[e \in \Span(R(\vb{x}) \cup S)] > b$.%

\begin{algorithm}[H]
\caption{Find a protection set $S$ for $\cM=(E,\cI)$ and $\vb{x} \in b\cdot \cP_\cM$}
\label{alg:find-protection}
\begin{algorithmic}
\Function{Protect}{$\cM, \vb{x}, b$}
    \State $S \gets \emptyset$
        \While{$\exists e \in E \setminus S,~\Pr[e \in \Span(R(\vb{x}) \cup S)] > b$}
        \State $S \gets S \cup \{e\}$
    \EndWhile
    \State \Return $S$
\EndFunction
\end{algorithmic}
\end{algorithm}

Note that \autoref{alg:find-protection} always terminates since $E$ is a finite set, and the modified greedy algorithm with this protection set $S$ guarantees $(1-b)$-selectability for every element $e \in E \setminus S$. 
More importantly, the protection is non-trivial, i.e., $S$ is a proper subset of $E$. 

\begin{lemma}[\cite{FSZ16}]
\label{lem:protection-is-proper}
    For any matroid $\cM=(E,\cI)$ and $\vb{x}\in b\cdot\cP_\cM$, $\textproc{Protect}(\cM,\vb{x}, b)\subsetneq E$.
\end{lemma}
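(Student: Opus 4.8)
The plan is to track the expected rank $\Phi(S) \eqdef \E[\rank(R(\vb{x}) \cup S)]$ as the procedure \textproc{Protect} grows its set $S$, and to show that at termination $\rank(S) < \rank(\cM)$. This suffices: every element of $S$ lies in $\Span(S)$, so if $\rank(S) < \rank(\cM)$ then $\Span(S) \ne E$, and any $e \in E \setminus \Span(S)$ lies outside $S$, witnessing $S \subsetneq E$.

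Concretely, let $e_1, \dots, e_\ell$ be the elements that \textproc{Protect} adds, in order, set $S_0 = \emptyset$ and $S_j = \{e_1, \dots, e_j\}$, and define $\Phi_j \eqdef \E[\rank(R(\vb{x}) \cup S_j)]$. First I would record the per-step identity: for every realization of $R(\vb{x})$, $\rank(R(\vb{x}) \cup S_j) - \rank(R(\vb{x}) \cup S_{j-1}) = \I[\,e_j \notin \Span(R(\vb{x}) \cup S_{j-1})\,]$, hence $\Phi_j - \Phi_{j-1} = \Pr[e_j \notin \Span(R(\vb{x}) \cup S_{j-1})]$, which the loop condition of \textproc{Protect} forces to be strictly below $1-b$. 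The point to exploit is that this increment is $0$ unless $\rank(S_j) = \rank(S_{j-1})+1$; so summing telescopically only over the rank-increasing steps, of which there are exactly $\rank(S_\ell)$ (the chain begins at $\rank(S_0)=0$), gives $\Phi_\ell - \Phi_0 < (1-b)\cdot\rank(S_\ell)$ whenever $\rank(S_\ell)\ge 1$.

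To finish I would bound the two ends of the chain: $\Phi_\ell \ge \rank(S_\ell)$ by monotonicity of the rank function, and $\Phi_0 = \E[\rank(R(\vb{x}))] \le \E[|R(\vb{x})|] = \sum_e x_e \le b\cdot\rank(\cM)$, the last inequality because $\vb{x}/b \in \cP_\cM$ is a convex combination of indicator vectors of independent sets, each of size at most $\rank(\cM)$. Combining, $\rank(S_\ell) \le \Phi_\ell < b\cdot\rank(\cM) + (1-b)\cdot\rank(S_\ell)$, so $\rank(S_\ell) < \rank(\cM)$, as desired. (We may assume $\cM$ is loopless, since loops carry $x_e = 0$; this guarantees $\rank(S_\ell)\ge 1$ whenever $S_\ell\ne\emptyset$, so the displayed strict inequality is available, and the case $S_\ell=\emptyset$ is immediate.)

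I expect the main obstacle to be isolating the right potential and the right accounting. The crude estimate $\Phi_j - \Phi_{j-1} < 1-b$ summed over all $\ell$ steps only yields $\rank(S_\ell) < b\cdot\rank(\cM) + (1-b)\ell$, which is vacuous when $|E| \gg \rank(\cM)$; the whole content is the observation that the increment vanishes on the non-rank-increasing steps, capping the useful sum at $\rank(\cM)$ terms. The rest — the per-step identity, the polytope bound on $\Phi_0$, and the translation of $\rank(S) < \rank(\cM)$ into $S \subsetneq E$ — is routine.
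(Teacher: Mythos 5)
Your proof is correct, and it matches the approach the paper uses for the generalized \autoref{lem:k-fold-protection-is-proper} (the extended $k$-fold analogue of this lemma), which is itself the argument from \cite{FSZ16}: telescope the expected rank over rank-increasing additions (each contributing less than $1-b$ by the loop condition), bound the base case $\E[\rank(R(\vb{x}))] \le \E[|R(\vb{x})|] \le b\cdot\rank(\cM)$ via the polytope constraint, and conclude that the protection set cannot have full rank. The only cosmetic difference is that you isolate $\rank(S_\ell)$ on one side and rearrange, whereas the paper's version of the argument instead upper-bounds $\rank(S)$ by $\rank(\cM)$ within the inequality chain and concludes $\E[\rank(R(\vb{x}) \cup S)] < \rank(\cM)$ directly; the content is the same, and your handling of the loopless / $S_\ell = \emptyset$ edge cases is appropriate.
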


Therefore, it remains to get a good OCRS for $\cM|_S$ and $\vb{x}|_S$, the restriction of the original matroid and vector to the protection set $S$.

\paragraph*{Chain decomposition.}
The matroid OCRS in~\cite{FSZ16} starts with an offline prepossessing that finds the following \emph{chain decomposition} of the elements:
\[\emptyset = N_\ell \subsetneq N_{\ell-1} \subsetneq \cdots \subsetneq N_1 \subsetneq N_0 = E\]
where $N_{i+1} = \textproc{Protect}(\cM|_{N_i},\vb{x}|_{N_i}, b)$ for every $0 \le i < \ell$.
And the OCRS is then operates by invoking \autoref{alg:modified-greedy} on matroid $\cM |_{N_i}$ with a protection set $N_{i+1}$ for each $e \in N_i \setminus N_{i+1}$.

It is easy to see that these algorithms together produces an independent set of $\cM$, and the selectability for each element $e \in N_i \setminus N_{i+1}$ is
\begin{equation*}
  \Pr[\text{$e$ is accepted} \mid \text{$e$ is active}] \ge 1 - \Pr[e\in \Span_{\cM |_{N_i}}(R(\vb{x}|_{N_i}) \cup N_{i+1})] \ge 1-b
\end{equation*}
where the last inequality holds due to the way $N_{i+1}$ is obtained using \autoref{alg:find-protection}.
By setting $b=\frac{1}{2}$, the resulting OCRS is $\frac{1}{2}$-selectable given any matroid $\cM$ and $\vb{x} \in \frac{1}{2}\cdot\cP_\cM$.

\subsection{Overview of our construction}
\label{subsec:overview-our-ocrs}

We now give a high-level overview of our construction and highlight main difficulties.
Let us first examine the case when $\cM$ is a $k$-uniform matroid and see why the simple greedy algorithm works better for larger $k$.

\paragraph*{Intuition from $k$-uniform matroids.}
When $\cM$ is a $k$-uniform matroid, it turns out that the simple greedy algorithm that always accepts the active element whenever possible yields an OCRS with a selectability of $1-O(\sqrt{\frac{\log k}{k}})$.
To see this, consider the following tighter analysis of selectability for $k$-uniform matroids: for every element $e \in E$,
\[\Pr[\text{$e$ is accepted} \mid \text{$e$ is active}] \ge \Pr[e\notin \Span(R(\vb{x}))] = \Pr[|R(\vb{x})| < k].\]
Intuitively, $|R(\vb{x})|$ represents the number of slots occupied by active elements, and we know $|R(\vb{x})| < k$ indicates $e\notin \Span(R(\vb{x}))$.
We want the bad event $|R(\vb{x})| \ge k$ to occur with a small probability.

Note that $|R(\vb{x})|$ is a sum of Bernoulli random variables and it concentrates very well: if we consider a slightly scaled-down $\vb{x}\in (1-O(\sqrt{\frac{\log k}{k}}))\cdot \cP_\cM$, Chernoff bound (\autoref{thm:chernoff}) tells us that $\Pr[|R(\vb{x})| \ge k] \le \frac{1}{k}$.
By \autoref{fact:shirnking}, one can further derive an OCRS for $k$-uniform matroids with a selectability of $(1-O(\sqrt{\frac{\log k}{k}}))(1-\frac{1}{k}) = 1-O(\sqrt{\frac{\log k}{k}})$. 

To summarize, the greedy algorithm performs well on $k$-uniform matroids because of the existence of a \textbf{fine-grained occupancy indicator $|R(\vb{x})|$ that concentrates well}.

\paragraph*{Main idea and challenges.}

For a $k$-fold matroid union $\cM^k = (E, \cI^k)$, the simple greedy algorithm could perform very poor due to inherent non-uniformity of $\cM^k$.
In the matroid OCRS by \cite{FSZ16}, this is resolved using the idea of chain decomposition.
For each level, an iterative procedure (\autoref{alg:find-protection}) is used to find a protection set $S$ that includes all elements that are easily spanned by $R(\vb{x}) \cup S$.
This is done by directly looking at the probability $\Pr[e \in \Span(R(\vb{x}) \cup S)]$. 

Our idea is to construct a different chain decomposition based on functions $\omega_e(\cdot): 2^E \to [0,k]$ that act as a ``generalized occupancy indicator'' for each element $e$, such that $\omega_e(\emptyset)=0$, $\omega_e(S)=k$ if $e$ is spanned by the set $S$, and we want $\omega_e(\cdot)$ to be as smooth as possible (i.e., $1$-Lipschitz).
For each level of the chain decomposition, we will add $e$ to the protection set $S$ whenever the expected occupancy $\E[\omega_e(R(\vb{x}) \cup S)]$ is large.

For $k$-uniform matroids, a simple occupancy indicator would be $\omega_e(S)=\min(k,|S|)$ (since we require its value to be between $0$ and $k$).
However, extending the definition of an occupancy function to a general $k$-fold matroid union introduces several challenges:
\begin{enumerate}
    \item (\textbf{Compatibility with chain decomposition}) The most crucial part of the chain decomposition in \cite{FSZ16} is to show the protection set $S$ is always a proper subset of $E$ (\autoref{lem:protection-is-proper}).
    Similarly, we will need to show that it is always possible to find a protection set $S \subsetneq E$ such that the expected occupancy $\E[\omega_e(R(\vb{x}) \cup S)]$ for every $e \in E\setminus S$ is smaller than $k$ by a large enough margin.
    \item (\textbf{Chernoff-strength concentration})
    Based on the fact that $\E[\omega_e(R(\vb{x}) \cup S)]$ is sufficiently smaller than $k$, we ultimately want to show that $\Pr[\omega_e(R(\vb{x}) \cup S) = k]$ is very small, which would imply a good selectability for $e$.
    This is simple for $k$-uniform matroids by using Chernoff bound.
    However, it turns out $\omega_e(\cdot)$ for general $k$-fold matroid unions does not admit a standard Chernoff-strength concentration inequality, and much more efforts are required to achieve a similar selectability guarantee.
\end{enumerate}

\subsection{An OCRS for $k$-fold matroid unions}
\label{subsec:our-ocrs}

In \autoref{subsubsec:occupancy-function}, we define our candidate occupancy functions and show some useful properties.
Then, in \autoref{subsubsec:chain-decomposition}, we show these functions are compatible 
with the chain decomposition approach and can be used to get an OCRS for $k$-fold matroid unions.
Finally, in \autoref{subsubsec:selectability}, we prove the selectability of this OCRS by showing these functions concentrates well enough using \autoref{thm:newconcentration}.
Some proofs in the section are deferred to \autoref{app:missing-proofs} for ease of reading.

\subsubsection{The occupancy function}
\label{subsubsec:occupancy-function}

To define the occupancy function, we will instead work with the following \emph{extended $k$-fold unions} which essentially introduces $k$ parallel copies for each element.
They are still matroids, and OCRS for them implies OCRS for $k$-fold matroid unions.
Therefore, it suffices for us to give an OCRS for the extended $k$-fold union.

\begin{definition}[Extended $k$-fold union]
\label{def:extended-k-fold-union}
    Given a matroid $\cM=(E,\cI)$ and an integer $k\ge 1$, let $\cM_*=(E_*, \cI_*)$ be the matroid that contains $k$ parallel copies $(e,1),\ldots,(e,k)$ of each element $e \in E$. Formally,
    \begin{align*}
        E_* &= E \times [k] = \{(e, i) : e \in E, i \in [k]\},\\
        \cI_* &= \{\{(e_1,i_1),\ldots,(e_t,i_t)\} : \{e_1,\ldots,e_t\} \in \cI, i_1,\ldots,i_t \in [k]\}.
    \end{align*}
    And we define the \emph{extended $k$-fold union} $\cM^k_*=(E_*, \cI^k_*)$ of $\cM$ to be the $k$-fold union of $\cM_*$.

\end{definition}

\begin{lemma}
\label{lem:extended-k-fold-union-suffices}
    The extended $k$-fold union $\cM^k_*$ of a matroid $\cM$ is a matroid.
    Furthermore, a $c$-selectable OCRS for $\cM^k_*$ implies a $c$-selectable OCRS for $\cM^k$.
\end{lemma}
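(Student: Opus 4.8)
The plan is to treat the two assertions in turn. For the claim that $\cM^k_*$ is a matroid, it suffices — by closure of matroid union under the union operation, exactly as already invoked for $\cM^k$ in the definition — to check that $\cM_*=(E_*,\cI_*)$ is a matroid, after which $\cM^k_* = \cM_*\vee\cdots\vee\cM_*$ is a matroid. I would verify the axioms for $\cM_*$ directly through the projection $g\colon E_*\to E$, $g(e,i)=e$, observing that $\cI_*=\{F\subseteq E_*:\ g|_F\text{ is injective and }g(F)\in\cI\}$. Nonemptiness and downward-closedness are immediate, and the exchange property transfers from $\cM$: if $F,F'\in\cI_*$ with $|F|>|F'|$, then injectivity gives $|g(F)|>|g(F')|$, so some $e\in g(F)\setminus g(F')$ has $g(F')\cup\{e\}\in\cI$; the copy $(e,i)\in F$ lying over $e$ is then not in $F'$, and $F'\cup\{(e,i)\}\in\cI_*$. (Alternatively one can simply cite that $\cM_*$ is the parallel extension of $\cM$ obtained by replacing each element with $k$ parallel copies, a standard matroid construction.)

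For the second assertion, the key structural observation is that $\cM^k$ is isomorphic to the restriction of $\cM^k_*$ to the set of ``first copies'' $E_1:=E\times\{1\}$. Concretely, a set $F\subseteq E_1$ lies in $\cI^k_*$ iff it partitions into $F_1,\dots,F_k$ with each $F_j\in\cI_*$; since every $F_j\subseteq E_1$ consists of distinct first copies, $F_j\in\cI_*$ is equivalent to $g(F_j)\in\cI$, and because $g$ is a bijection on $E_1$ this says precisely that $g(F)$ partitions into $k$ members of $\cI$, i.e.\ $g(F)\in\cI^k$. Hence $g|_{E_1}$ is an isomorphism $\cM^k_*|_{E_1}\cong\cM^k$ (equivalently, restriction commutes with matroid union and $\cM_*|_{E_1}\cong\cM$). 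I would record this isomorphism as the first step of the reduction.

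Given the isomorphism, the reduction is the standard passage from an OCRS for a matroid to one for a restriction, made with an exact coupling. Given $\vb{x}\in\cP_{\cM^k}$, define $\vb{x}_*\in[0,1]^{E_*}$ by $(\vb{x}_*)_{(e,1)}=x_e$ and $(\vb{x}_*)_{(e,j)}=0$ for $j\ge2$; since the independence polytope of a restriction zero-pads into that of the ambient matroid and $\cM^k_*|_{E_1}\cong\cM^k$, we get $\vb{x}_*\in\cP_{\cM^k_*}$. Now run the given $c$-selectable OCRS for $\cM^k_*$ on $\vb{x}_*$ under the coupling $(e,1)\in R(\vb{x}_*)\iff e\in R(\vb{x})$ (both with probability $x_e$), the copies $(e,j)$ with $j\ge2$ being always inactive and fed to the OCRS in any position consistent with the adversary's order on $E$ (harmless, since they must be rejected regardless). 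Because an OCRS never selects an inactive element, its output $A_*$ is contained in $E_1$, hence is independent in $\cM^k_*|_{E_1}$, so $A:=g(A_*)$ is independent in $\cM^k$ and is produced online; finally $\Pr[e\in A\mid e\in R(\vb{x})]=\Pr[(e,1)\in A_*\mid (e,1)\in R(\vb{x}_*)]\ge c$, as desired.

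I do not expect a genuine obstacle here: the only steps needing care are the exchange axiom for $\cM_*$ and the restriction isomorphism $\cM^k_*|_{E_1}\cong\cM^k$ (which rests on restriction commuting with matroid union). Once these are in hand the OCRS reduction is routine, since the coupling is exact and $g$ carries independent sets of the union to independent sets of the union.
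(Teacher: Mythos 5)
Your proof is correct and follows the same route as the paper: observe that $\cM_*$ is a matroid (the parallel extension of $\cM$), invoke closure of matroid union, identify $\cM^k$ with the restriction $\cM^k_*|_{E\times\{1\}}$ via $(e,1)\mapsto e$, and transfer the OCRS by zero-padding. You simply spell out the details (the exchange axiom for $\cM_*$, the restriction/union commutation, and the coupling) that the paper leaves as ``straightforward to check.''
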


We are now ready to define the following \emph{occupancy function} on $\cM^k_*=(E_*=E\times [k],\cI^k_*)$.
Intuitively, the function indicates the number of ``slots'' for elements $(e,\cdot) \in E_*$ that are occupied by the elements in $S$.
We then show the occupancy function has good properties: it is monotone and 1-Lipschitz.
More importantly, the value of $\occ_e(S)$ can be used to deduce whether $(e,\cdot) \in E_*$ is spanned by other elements in $S$. %

\begin{definition}[Occupancy function]
    Given an extended $k$-fold union $\cM^k_*=(E_*=E\times [k],\cI^k_*)$, for every $e \in E$, define its \emph{occupancy function} $\occ_{e}: 2^{E_*} \to [0,k]$ as the function where for all $S \subseteq E_*$,\footnote{When it is clear from context, we will use $\rank(\cdot)$/$\Span(\cdot)$ to denote the rank/span of a set of elements in $\cM^k_*$ for the ease of notation.}
    \begin{equation*}
        \occ_e(S)=k-\rank(S \cup (\{e\}\times[k])) + \rank(S).
    \end{equation*}
\end{definition}

\begin{lemma}
\label{lem:occupancy-function-basic-property}
    For any extended $k$-fold union $\cM^k_*=(E_*,\cI^k_*)$ and element $(e,i) \in E_*$, $\occ_e$ satisfies
    \begin{enumerate}
        \item \emph{(Monotone)} $\occ_e(S) \le \occ_e(T)$ for every $S \subseteq T \subseteq E_*$;
        \item \emph{($1$-Lipschitz)} $\occ_e(S \cup \{a\}) - \occ_e(S) \le 1$ for every $S \subseteq E_*$ and $a \in E_*$.
    \end{enumerate}
\end{lemma}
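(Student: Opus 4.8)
The plan is to verify both properties directly from the defining formula $\occ_e(S)=k-\rank(S \cup (\{e\}\times[k])) + \rank(S)$, using only standard submodularity and monotonicity of the matroid rank function of $\cM^k_*$. For readability I would first record the two facts about rank that do all the work: (i) monotonicity, $\rank(A) \le \rank(B)$ whenever $A \subseteq B$; and (ii) submodularity, $\rank(A \cup \{a\}) - \rank(A) \ge \rank(B \cup \{a\}) - \rank(B)$ whenever $A \subseteq B$ (i.e.\ the marginal gain of adding an element is non-increasing). These hold because $\cM^k_*$ is a matroid (\autoref{lem:extended-k-fold-union-suffices}).

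For monotonicity of $\occ_e$, fix $S \subseteq T \subseteq E_*$ and write $Q = \{e\}\times[k]$, so the claim is $\rank(T) - \rank(S) \le \rank(T\cup Q) - \rank(S\cup Q)$. This is exactly submodularity applied to the "telescoping" chain: adding the block $Q$ to the larger set $T$ can only gain at most as much rank as adding $Q$ to the smaller set $S$; formally one adds the elements of $Q$ one at a time to both $S$ and $T$ and applies fact (ii) to each marginal, then sums. Rearranging gives $\occ_e(S) \le \occ_e(T)$.

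For the $1$-Lipschitz property, fix $S \subseteq E_*$ and $a \in E_*$. We must show $\bigl[\rank(S\cup\{a\}) - \rank((S\cup\{a\})\cup Q)\bigr] - \bigl[\rank(S) - \rank(S\cup Q)\bigr] \le 1$, i.e.\ $\bigl(\rank(S\cup\{a\}) - \rank(S)\bigr) + \bigl(\rank(S\cup Q) - \rank(S\cup Q\cup\{a\})\bigr) \le 1$. The first parenthesized term is the marginal gain of $a$ over $S$, which lies in $\{0,1\}$ since $\cM^k_*$ is a matroid; the second term is $\le 0$ by monotonicity of $\rank$. Hence the whole expression is at most $1$. (If one also wants the trivial lower bound $\occ_e(S\cup\{a\}) - \occ_e(S) \ge 0$ it follows from the monotonicity part, but the statement only asks for the upper bound.)

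I do not anticipate a genuine obstacle here — this is a routine manipulation of the rank function — so the only "care" required is bookkeeping: making sure the block $Q=\{e\}\times[k]$ is added elementwise and that submodularity is invoked with the correct containment direction. One subtlety worth stating explicitly is that $\occ_e$ indeed maps into $[0,k]$: non-negativity is $\rank(S\cup Q) - \rank(S) \le k = |Q|$ (submodularity/monotonicity bounding a marginal over $k$ elements), and the upper bound $\occ_e(S) \le k$ is $\rank(S) \le \rank(S \cup Q)$ (monotonicity); this justifies the codomain claimed in the definition and can be folded into the same proof.
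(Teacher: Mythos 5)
Your argument is correct and follows essentially the same route as the paper's proof: monotonicity of $\occ_e$ via submodularity of the rank function of $\cM^k_*$, and $1$-Lipschitzness by combining the unit-increase property and monotonicity of rank. One small slip worth flagging: in the monotonicity step you state ``the claim is $\rank(T) - \rank(S) \le \rank(T\cup Q) - \rank(S\cup Q)$,'' but rearranging $\occ_e(S) \le \occ_e(T)$ actually gives the reverse inequality $\rank(T\cup Q) - \rank(S\cup Q) \le \rank(T) - \rank(S)$ (equivalently $\rank(T\cup Q) - \rank(T) \le \rank(S\cup Q) - \rank(S)$), which is what your verbal explanation and the element-by-element telescoping over $Q$ correctly establish; so the displayed ``claim'' should simply have its inequality flipped.
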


\begin{lemma}
\label{lem:occupancy-function-good-indicator}
    For any extended $k$-fold union $\cM^k_*=(E_*,\cI^k_*)$, element $(e,i) \in E_* $, and set $S \subseteq E_*$, $\occ_e(S) < k$ implies $(e,i) \notin \Span(S \setminus \{(e,i)\})$.
\end{lemma}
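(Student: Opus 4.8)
The plan is to argue by contradiction, exploiting the symmetry among the $k$ parallel copies $(e,1),\dots,(e,k)$ of $e$ in $\cM^k_*$. First I would unpack the hypothesis: by the definition of the occupancy function, $\occ_e(S)<k$ is equivalent to $\rank(S)<\rank\!\big(S\cup(\{e\}\times[k])\big)$, i.e.\ adjoining all $k$ copies of $e$ strictly increases the rank. Set $T:=S\setminus\{(e,i)\}$. Two elementary observations will be used repeatedly: $\rank(T)\le\rank(S)$ since $T\subseteq S$, and $T\cup(\{e\}\times[k])=S\cup(\{e\}\times[k])$ since $(e,i)\in\{e\}\times[k]$.

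Now suppose, toward a contradiction, that $(e,i)\in\Span(T)$; the aim is to deduce that in fact $\{e\}\times[k]\subseteq\Span(T)$. Copies $(e,j)$ that already lie in $T$ are in $\Span(T)$ trivially, and $(e,i)\in\Span(T)$ by assumption, so only copies $(e,j)$ with $(e,j)\notin T$ and $j\neq i$ remain. For such $j$, the transposition interchanging $(e,i)$ and $(e,j)$ and fixing every other element of $E_*$ is an automorphism of $\cM_*$ — because $\cM_*$ is obtained by replacing each element of $\cM$ with $k$ parallel copies, any permutation of the copy-indices of a single element is a symmetry of $\cI_*$ — and hence an automorphism of the $k$-fold union $\cM^k_*$ as well. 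Since this automorphism fixes $T$ pointwise, it maps $\Span(T)$ onto $\Span(T)$, so from $(e,i)\in\Span(T)$ we get $(e,j)\in\Span(T)$. Hence $\{e\}\times[k]\subseteq\Span(T)$.

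Applying the standard matroid identity $\rank(A\cup B)=\rank(B)$ whenever $A\subseteq\Span(B)$ with $B=T$ and $A=\{e\}\times[k]$, we obtain $\rank\!\big(T\cup(\{e\}\times[k])\big)=\rank(T)$. Combining with the two observations above, $\rank\!\big(S\cup(\{e\}\times[k])\big)=\rank\!\big(T\cup(\{e\}\times[k])\big)=\rank(T)\le\rank(S)$, which contradicts $\occ_e(S)<k$. Therefore $(e,i)\notin\Span(T)=\Span\!\big(S\setminus\{(e,i)\}\big)$, as claimed.

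The only step requiring any care is the symmetry argument: verifying that swapping two parallel copies of $e$ is an automorphism of $\cM^k_*$ (which follows from the construction of $\cM_*$ and the fact that matroid union respects automorphisms) and hence preserves spans. Everything else is a routine rank computation; in particular, phrasing the whole argument in terms of $T=S\setminus\{(e,i)\}$ rather than $S$ makes it uniform and sidesteps the need to treat the cases $(e,i)\in S$ and $(e,i)\notin S$ separately.
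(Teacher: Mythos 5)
Your proof is correct and follows essentially the same approach as the paper, just organized as a contradiction argument rather than a direct implication: the paper observes that $\occ_e(S)<k$ forces some $(e,j)\notin\Span(S)$ and then asserts the symmetry step in a single sentence, whereas you run the contrapositive and spell out explicitly why swapping parallel copy-indices is an automorphism of $\cM^k_*$ fixing $T$. Your version is the more careful write-up of the same idea.
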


\begin{example}
\label{example:k-uniform}
    When $\cM$ is a $1$-uniform matroid of size $n$, its extended $k$-fold union $\cM^k_*$ is a $k$-uniform matroid of size $kn$.
    For every $e \in E$ and $S \subseteq E_*$, we have
    \begin{align*}
        \rank(S \cup (\{e\}\times[k])) &= \min(k, |S \cup (\{e\}\times[k])|) = k,\\
        \rank(S) &= \min(k, |S|).
    \end{align*}
    Therefore, $\occ_e(S) = \min(k,|S|)$, i.e., the number of occupied slots by $S$.

    Also, note that for any $\vb{x_*} \in (1-O(\sqrt{\frac{\log k}{k}})) \cdot \cP_{\cM^k_*}$, the value $\occ_e(R(\vb{x_*}))$ concentrates very well as a capped sum over Bernoulli random variables.
    Therefore, the bad event $\occ_e(R(\vb{x_*}))=k$ rarely happens and the simple greedy algorithm without protection works.
\end{example}

\subsubsection{Chain decomposition based on occupancy functions}
\label{subsubsec:chain-decomposition}

Similar to the matroid OCRS by \cite{FSZ16}, our OCRS for extended $k$-fold union $\cM^k_*=(E_*,\cI^k_*)$ and $\vb{x_*} \in b\cdot \cP_{\cM^k_*}$ starts with an offline prepossessing step that finds the following chain decomposition of elements in $E_*$,
\[\emptyset = N_{\ell} \subsetneq N_{\ell-1} \subsetneq \cdots \subsetneq N_{1} \subsetneq N_{0} = E_*\]
where $N_{j+1} = \textproc{KFoldProtect}(\cM^k_*|_{N_j},\vb{x_*}|_{N_j}, b)$ for every $0 \le j < \ell$, as described in \autoref{alg:find-k-fold-protection}.
Unlike \autoref{alg:find-protection}, it relies on the occupancy functions which are only defined for extended $k$-fold unions.

\begin{algorithm}[H]
\caption{Find a protection set $S$ for extended $k$-fold union $\cM^k_*=(E_*,\cI^k_*)$ and $\vb{x_*} \in b\cdot \cP_{\cM^k_*}$}
\label{alg:find-k-fold-protection}
\begin{algorithmic}
\Function{KFoldProtect}{$\cM^k_*, \vb{x_*}, b$}
    \State $S_0,S \gets \emptyset$
    \While{$\exists e \in E \setminus S_0,~\E[\occ_e(R(\vb{x_*}) \cup S)] > b k$}
        \State $S_0 \gets S_0 \cup \{e\}$
        \State $S \gets S \cup (\{e\} \times [k])$
    \EndWhile
    \State \Return $S$
\EndFunction
\end{algorithmic}
\end{algorithm}

Before introducing our OCRS, we need to make sure the chain decomposition above is well-defined, i.e., \autoref{alg:find-k-fold-protection} will always returns a proper subset $S$ of elements, and $\cM^k_*|_{S}$ remains an extended $k$-fold union.
This is formally stated in \autoref{lem:k-fold-protection-is-proper}, which resembles \autoref{lem:protection-is-proper} in~\cite{FSZ16}.

\begin{lemma}
\label{lem:k-fold-protection-is-proper}
    For any $b \in (0,1)$, any extended $k$-fold union $\cM^k_*=(E_*,\cI^k_*)$ and $\vb{x_*}\in b\cdot\cP_{\cM^k_*}$, $S \subsetneq~E_*$ for $S = \textproc{KFoldProtect}(\cM^k_*,\vb{x_*}, b)$.
    Moreover, $\cM^k_*|_{S}$ remains an extended $k$-fold union.
\end{lemma}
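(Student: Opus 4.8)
The plan is to mirror the structure of the \cite{FSZ16} argument for \autoref{lem:protection-is-proper}, but with the combinatorial potential replaced by a (fractional) rank argument tailored to the occupancy functions. Write $S_0 \subseteq E$ for the set of ``base'' elements selected by \textproc{KFoldProtect}, so that $S = S_0 \times [k]$, and let $e_1,e_2,\ldots$ be the elements added in order, with $S^{(j)} = \{e_1,\ldots,e_j\}\times[k]$ the protection set after $j$ iterations. The key quantity to track is $\E[\rank(R(\vb{x_*}) \cup S^{(j)})]$, the expected rank in $\cM^k_*$ of the union of the random active set and the current protection set. I first want to show this expectation increases by a definite amount each time we add an element: when $e_{j+1}$ is added because $\E[\occ_{e_{j+1}}(R(\vb{x_*}) \cup S^{(j)})] > bk$, the definition $\occ_e(T) = k - \rank(T \cup (\{e\}\times[k])) + \rank(T)$ rearranges to $\rank(T \cup (\{e\}\times[k])) - \rank(T) = k - \occ_e(T)$, so in expectation adding the $k$ copies of $e_{j+1}$ raises the rank by $k - \E[\occ_{e_{j+1}}(R(\vb{x_*})\cup S^{(j)})] < k - bk = (1-b)k$; but it also raises it by \emph{at least} something, and more to the point the rank can never decrease. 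The clean way to get a contradiction from ``$S = E_*$'' is the dual bound: on one hand $\rank(R(\vb{x_*}) \cup S^{(j)}) \le \rank(E_*) = k \cdot \rank_{\cM}(E)$ always; on the other, I will lower bound $\E[\rank(R(\vb{x_*}) \cup S^{(j)})]$ in terms of $j$.

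For the lower bound I would use that $\vb{x_*} \in b \cdot \cP_{\cM^k_*}$, hence $\vb{x_*}/b$ is a convex combination of indicators of independent sets of $\cM^k_*$; by the matroid union rank formula (or just by taking one independent set in the support), $\sum_{(e,i)} (x_*)_{(e,i)} \le b \cdot k \cdot \rank_{\cM}(E)$, and more usefully $\E[\rank(R(\vb{x_*}) \cup T)] \ge$ the contribution one gets from the protection copies already forced in. Concretely: each time $e_{j+1}$ is added, $\occ_{e_{j+1}}(R(\vb{x_*}) \cup S^{(j)}) \le k$ pointwise (the occupancy function is capped at $k$), so $k - \occ_{e_{j+1}}(R(\vb{x_*})\cup S^{(j)}) \ge 0$, and I want the \emph{expected} increment $\E[k - \occ_{e_{j+1}}(\cdot)]$ to be strictly positive — which it is, but that alone only says the rank is nondecreasing, not that it grows. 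So the argument should instead be run the other way, following \cite{FSZ16} literally: consider the potential $\Phi(j) = \E[\occ_{e}(R(\vb{x_*})\cup S^{(j)})]$ summed appropriately, or directly bound $j$ by comparing $\sum_{\text{added }e} \E[\text{marginal rank gain}]$ against the total available rank. The cleanest route: since each added element contributes expected rank gain $< (1-b)k$ but the rank is an integer-valued submodular function bounded by $k\rank_\mathcal{M}(E)$, and since — this is the crux — each added element also contributes expected rank gain \emph{bounded below} by using that $\occ$ is $1$-Lipschitz and monotone together with the fact that $R(\vb{x_*})$ restricted to the $k$ copies of the just-added $e$ already has expected rank contribution roughly $b$ (from $\vb{x_*} \in b\cdot\cP$), one concludes the number of iterations is finite and, by a counting comparison, strictly fewer than $|E|$ base elements can be added — giving $S_0 \subsetneq E$ and hence $S \subsetneq E_*$.

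Finally, the ``moreover'' clause — that $\cM^k_*|_S$ remains an extended $k$-fold union — should be immediate from the construction: $S = S_0 \times [k]$ is exactly the set of all $k$ parallel copies of the base elements in $S_0$, so $\cM^k_*|_S = (\cM|_{S_0})^k_*$, i.e. the extended $k$-fold union of the restriction $\cM|_{S_0}$. This is where maintaining the invariant $S = S_0 \times [k]$ throughout \autoref{alg:find-k-fold-protection} pays off, and it is why the algorithm adds all $k$ copies $\{e\}\times[k]$ at once rather than individual copies.

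I expect the main obstacle to be the first part: unlike \cite{FSZ16}, where ``$e \in \Span(R(\vb{x})\cup S)$ with probability $> b$'' directly feeds a clean expected-rank telescoping, here the threshold is on $\E[\occ_e(\cdot)]$, and relating expected occupancy back to expected rank increments — while controlling that these increments are bounded \emph{below} by a constant fraction, so that only $O(\rank_\mathcal{M}(E))$ many can occur and in particular fewer than $|E|$ — requires a careful submodularity/linearity-of-expectation bookkeeping, very possibly invoking the matroid-union rank formula to translate $\vb{x_*} \in b\cdot\cP_{\cM^k_*}$ into a usable per-element lower bound on rank contribution. Everything else (the $1$-Lipschitz and monotonicity facts, the capped-at-$k$ observation, the structural ``moreover'') is routine given the lemmas already established.
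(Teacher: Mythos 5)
Your ``moreover'' clause is correct and matches the paper: the invariant $S=S_0\times[k]$ is preserved, so $\cM^k_*|_S$ is the extended $k$-fold union of $\cM|_{S_0}$.

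For the main claim, however, there is a genuine gap. You correctly set up the quantity $\E[\rank(R(\vb{x_*})\cup S^{(j)})]$ and observe that the marginal increase when adding $\{e_{j+1}\}\times[k]$ equals $k-\E[\occ_{e_{j+1}}(R(\vb{x_*})\cup S^{(j)})]<(1-b)k$; but then you conclude you need a \emph{lower} bound on each increment in order to bound the number of iterations, and you never finish the argument. This is the wrong direction and is not what the paper does. The paper needs no lower bound on increments and makes no attempt to count iterations. Instead, it proves $\E[\rank_{\cM^k_*}(R(\vb{x_*})\cup S)]<\rank_{\cM^k_*}(E_*)$, which directly implies $S\ne E_*$ (if $S=E_*$ the left side would equal $\rank(E_*)$ with probability one).

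The step you are missing is what makes the telescoping sum close. If you telescope over \emph{all} $|S_0|$ added elements you get $\E[\rank(R\cup S)]<\E[\rank(R)]+(1-b)k|S_0|$, and this is useless because $|S_0|$ can far exceed $\rank_\cM(E)$ (elements that are already spanned by the current $S_0$ still satisfy $\E[\occ_e(\cdot)]>bk$, indeed with $\occ_e\equiv k$, and get added while contributing nothing to the rank). The paper instead telescopes only over the $r=\rank_\cM(S_0)$ elements $e_1,\ldots,e_r$ that actually increase $\rank_\cM(S_0)$ during the run. Two facts make this legitimate: (i) $\Span_{\cM^k_*}(\{e_1,\ldots,e_r\}\times[k])=\Span_{\cM^k_*}(S)$, so replacing $S$ by $\{e_1,\ldots,e_r\}\times[k]$ does not change $\rank(R\cup\,\cdot\,)$; and (ii) at the moment $e_i$ is chosen, the protection set in force has the same span as $\{e_1,\ldots,e_{i-1}\}\times[k]$, so the algorithm's threshold condition transfers to $\E[\occ_{e_i}(R(\vb{x_*})\cup\{e_1,\ldots,e_{i-1}\}\times[k])]>bk$. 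This yields $\E[\rank(R\cup S)]<\E[\rank(R(\vb{x_*}))]+(1-b)kr$. Now the two bounds you do have close the argument cleanly: $\E[\rank(R(\vb{x_*}))]\le\E[|R(\vb{x_*})|]\le b\,\rank(E_*)$ from $\vb{x_*}\in b\cdot\cP_{\cM^k_*}$, and $kr=\rank_{\cM^k_*}(S)\le\rank_{\cM^k_*}(E_*)$, giving $\E[\rank(R\cup S)]<\rank(E_*)$. The essential idea you did not articulate is replacing $|S_0|$ by $\rank_\cM(S_0)$ via a spanning-subset argument; without it, the counting does not go through.
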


Having obtained such a chain decomposition for $\cM^k_*$ and $\vb{x_*} \in b\cdot\cP_{\cM^k_*}$, our OCRS is simply running the modified greedy algorithm, \autoref{alg:modified-greedy}, for each submatroid $\cM^k_*|_{N_j}$ with a protection set $N_{j+1}$ for all $0 \le j < \ell$ together.
Note that although the chain decomposition is constructed with $\vb{x_*}$, an extra scaling factor of $e^{-(1-b)}$ will be applied before invoking \autoref{alg:modified-greedy}.
This will be useful later when we apply the bicriterion concentration inequality.

\begin{algorithm}[H]
\caption{OCRS for extended $k$-fold union $\cM^k_*=(E_*,\cI^k_*)$ and $\vb{x_*} \in b \cdot \cP_{\cM^k_*}$}
\label{alg:our-ocrs}
\begin{algorithmic}
    \State Construct the chain decomposition 
    $\emptyset = N_{\ell} \subsetneq \cdots \subsetneq N_{1} \subsetneq N_{0} = E_*$ for $\cM^k_*$ and $\vb{x_*}$
    \For{each arriving active element $(e,i) \in E_*$}
        \State Sample $r \sim \Ber(e^{-(1-b)})$ %
        \If{$r=0$}
            \State Reject $(e,i)$ 
        \Else
            \LComment{The set of remaining active elements follows the same distribution as $R(e^{-(1-b)} \vb{x_*})$.}
            \State Find $0 \le j < \ell$ such that $(e,i) \in N_j
        \setminus N_{i+1}$
            \State Invoke \autoref{alg:modified-greedy} for $\cM^k_*|_{N_j}$ with protection set $N_{j+1}$ for $(e,i)$
        \EndIf
    \EndFor
\end{algorithmic}
\end{algorithm}

The feasibility of such a scheme follows exactly from \cite{FSZ16} as running \autoref{alg:modified-greedy} on any chain decomposition always produces an independent set.
We are left to show the OCRS guarantees a good selectability for any $\cM^k_*$ and $\vb{x_*} \in b\cdot \cP_{\cM^k_*}$ for some parameter $b$.
In fact, we will set $b=1-\sqrt{\frac{\log k}{k}}$ and show the selectability is at least $1-O(\sqrt{\frac{\log k}{k}})$, proving \autoref{thm:ocrs-for-k-fold-union}.

\subsubsection{Analyzing the selectability}
\label{subsubsec:selectability}

Without loss of generality, let us focus on the selectability of elements in the first layer $E_* \setminus N_1$, since a same proof would work for all submatroid $\cM^k_* |_{N_j}$ as they remains to be extended $k$-fold unions.

By \autoref{lem:occupancy-function-good-indicator}, for every element $(e,i) \in E_* \setminus N_1$, its selectability can be lower bounded as
\begin{align*}
  \Pr[\text{$(e,i)$ is accepted} \mid \text{$(e,i)$ is active}] 
  &\ge \Pr[(e,i) \notin \Span((R(e^{-(1-b)}\vb{x_*}) \setminus \{(e,i)\}) \cup N_1)] \\
  &\ge \Pr[\occ_e(R(e^{-(1-b)}\vb{x_*}) \cup N_1) < k].  
\end{align*}
On the other hand, by the way chain decomposition is obtained using \autoref{alg:find-k-fold-protection}, we know even without the extra scaling of $e^{-(1-b)}$, the expected value of $\occ_e(R(\vb{x_*}) \cup N_1)$ is not too close to $k$:
\[\E[\occ_e(R(\vb{x_*}) \cup N_1)] \le bk.\]

For the ease of notation, denote $X=R(\vb{x_*})$ and $X'=R(e^{-(1-b)}\vb{x_*})$.
Fixing an element $(e,i) \in E_*$, define the function $f:2^{E_*}\to [0,k]$ where for every $S \subseteq E_*$,
\[f(S) = \occ_e(S \cup N_1).\]
Then, to lower bound selectability for $(e,i)$, it is equivalent to upper bound $\Pr[f(X') = k]$ given that $\E[f(X)] \le bk$.
Specifically, to get a selectability of $1-O(\sqrt{\frac{\log k}{k}})$, we will set $b=1-\sqrt{\frac{\log k}{k}}$, and it suffices to show the following \emph{bicriterion concentration inequality}:
\begin{equation}\label{eq:bizarre-concentration}
\E[f(X)] \le k - \sqrt{k \log k} \implies
 \Pr\left[f(X') \ge \E[f(X)]+ \sqrt{k \log k} \right] \le O\left(\frac{1}{k}\right). \tag{$\ast$}
\end{equation}

By \autoref{lem:occupancy-function-basic-property}, we know $f$ is always monotone and $1$-Lipschitz.
Then, using \autoref{thm:newconcentration} (and recall that $X'=R(e^{-\sqrt{\log k / k}} \vb{x_*})$), we have 
\[\Pr\left[f(X') \ge \E[f(X)]+ \sqrt{k \log k} \right] \le \exp\left( -\sqrt{\frac{\log k}{k}} \cdot \sqrt{k \log k}\right) = \frac{1}{k}.\]
Therefore, for extended $k$-fold union $\cM^k_*$ and $\vb{x_*} \in b\cdot\cP_{\cM^k_*}$, running \autoref{alg:our-ocrs} yields
\begin{align*}
    \Pr[\text{$(e,i)$ is accepted} \mid \text{$(e,i)$ is active}] 
    &\ge 1- \Pr\left[f(X') \ge k\right] \\
    &\ge 1- \Pr\left[f(X') \ge \E[f(X)]+ \sqrt{k \log k} \right] \ge 1-\frac{1}{k}.
\end{align*}
Together with \autoref{fact:shirnking} and \autoref{lem:extended-k-fold-union-suffices}, we prove \autoref{thm:ocrs-for-k-fold-union} by showing the existence of an OCRS for all $k$-fold union $\cM^k$ and $\vb{x_*} \in \cP_{\cM^k}$ with a selectability of
\[\left(1-\frac{1}{k}\right) \cdot b \cdot e^{-(1-b)} = \left(1-\frac{1}{k}\right)\cdot \left(1-\sqrt{\frac{\log k}{k}}\right) \cdot e^{-\sqrt{\frac{\log k}{k}}} = 1 - O\left(\sqrt{\frac{\log k}{k}}\right).\]

\begin{remark}
It might seems bizarre and unnecessary to consider $f(X')$ instead of $f(X)$.
Indeed, since $f$ is monotone non-decreasing, the following claim that only contains $f(X)$ would imply \eqref{eq:bizarre-concentration}, and it looks more like a standard concentration inequality:
\begin{equation}\label{eq:dream-concentration}
\E[f(X)] \le k - \sqrt{k \log k} \implies
 \Pr\left[f(X) \ge \E[f(X)]+ \sqrt{k \log k} \right] \le O\left(\frac{1}{k}\right). \tag{$\ast\ast$}
\end{equation}

We know \eqref{eq:dream-concentration} is true when $f$ is a sum over Bernoulli random variables by Chernoff bound, and it is tempting to use more powerful concentration inequalities to prove \eqref{eq:dream-concentration} for general $1$-Lipschitz $f$.
Unfortunately, such a bound does not exist for general monotone and $1$-Lipschitz set functions (see \autoref{sec:concentration} for details), and it turns out to be impossible even for the specific $f$ we use here, as \autoref{example:kn-uniform} shown.
\end{remark}

\section{A Bicriterion Concentration Inequality}
\label{sec:concentration}
In this section, we assume the ground set $E=[n]$ and consider a function $f:\{0,1\}^n \to \mathbb{R}$ that satisfies the following properties:\footnote{Note that $f$ can be equivalently viewed as a function over subsets of a ground set of size $n$, as we did in \autoref{sec:ocrs}.}
\begin{enumerate}
    \item (\textbf{Monotone}) $f(\vb{x}) \le f(\vb{y})$ for all $\vb{x},\vb{y} \in \{0,1\}^n$ where $\vb{x} \le \vb{y}$ (element-wise).
    \item (\textbf{$1$-Lipschitz}) $|f(\vb{x}) - f(\vb{y})| \le \lVert \vb{x}-\vb{y} \rVert_1$ for all $\vb{x},\vb{y} \in \{0,1\}^n$.
\end{enumerate}
Also, let $\vb{X} = (X_1, X_2, \ldots, X_n)$ be a vector of $n$ independent Bernoulli random variables where $X_i \sim \Ber(p_i)$ for each $i \in [n]$ and $\vb{p} \in [0,1]^n$.
For simplicity, we denote this as $\vb{X} \sim \Ber(\vb{p})$.

We are interested in how well $f(\vb{X})$ concentrates on its upper tail.
By McDiarmid's inequality (\autoref{thm:mcdiarmid}), for every $t > 0$,
\[\Pr\left[f(\vb{X}) \ge \E[f(\vb{X})]+t\right] \le e^{-\frac{2t^2}{n}},\]
Unfortunately, the bound depends on the dimension $n$, whereas our application in \autoref{sec:ocrs} requires a \emph{dimension-free} bound that is independent from $n$.
In fact, it is known that dimension-free concentration inequality does not exist for $f$ in general (see, e.g., \cite{Von10}).

The good news is that, for our application, it suffices to consider another $\vb{X'}\sim\Ber(\vb{p'})$ with slightly smaller parameters $\vb{p'}<\vb{p}$ and show $f(\vb{X'})$ does not exceed $\E[f(\vb{X})]$ by much, with high probability. Formally, we define $\vb{X}^{(s)}$ with a scaling factor $s$ as follows:

\begin{definition}[Scaling]
    Given $n$ independent Bernoulli random variables $\vb{X} \sim \Ber(\vb{p})$, for any \emph{scaling factor} $s \ge 0$, define $\vb{X}^{(s)} \sim \Ber(e^{-s}\vb{p})$.
    In other words, $X^{(s)}_i \sim \Ber(e^{-s}p_i)$ for all $i \in [n]$.
\end{definition}

And we prove \autoref{thm:newconcentration}, a \emph{bicriterion concentration inequality}, where the bound depends on both the scaling factor $s$ and the deviation size $t$.

\restateconcentration*

For our application in \autoref{sec:ocrs}, we basically set $s=\sqrt{\frac{\log k}{k}}, t=\sqrt{k\log k}$ for some $k \approx \E[f(\vb{X})]$ and the inequality gives us $\Pr[f(\vb{X}^{(s)}) \ge k + \sqrt{k \log k}] \le \frac{1}{k}$.
Note that this bound is sharp up to a constant factor in the exponent: even in the case where $f(\vb{x})= \sum_{i=1}^n x_i$, the Chernoff bound of $f(\vb{X}^{(s)})$ only yields $\Pr[f(\vb{X}^{(s)}) \ge k + \sqrt{k \log k}] \le O(\frac{1}{k^c})$ for some constant $c$.

\subsection{Technical overview}
\label{subsec:concentration-technical-overview}

Before getting into the proof, let us first outline our approach and highlight the main difficulty.
Our proof utilizes the entropy method for self-bounding functions \cite{BLM00,BLM03,MR06,BLM09}.
Roughly speaking, to prove a exponential concentration inequality for some $Z=f(\vb{X})$, the plan is to establish a differential inequality for the moment-generating function $\E[e^{\lambda Z}]$ based on the following modified logarithmic Sobolev inequality.
If this differential inequality implies strong bounds for $\E[e^{\lambda Z}]$, a concentration inequality can be subsequently obtained.

\begin{lemma}[A modified logarithmic Sobolev inequality \cite{Mas00}]
\label{lem:log-sobolev}
    Given $n$ independent Bernoulli random variables $\vb{X}$ and a function $f: \{0,1\}^n \to \mathbb{R}$.
    Let $Z=f(\vb{X})$ and $Z_i=f_i(X_1,\ldots,X_{i-1},X_{i+1},\ldots,X_n)$ for an arbitrary function $f_i:\{0,1\}^{n-1} \to \mathbb{R}$.
    For any $\lambda \in \mathbb{R}$,
    \[\lambda\E\left[Z e^{\lambda Z}\right] -\E\left[e^{\lambda Z}\right] \log \E\left[e^{\lambda Z}\right] \le \sum_{i=1}^n \E\left[e^{\lambda Z}\phi(-\lambda(Z-Z_i))\right]\]
    where $\phi(x)=e^x-x-1$.
\end{lemma}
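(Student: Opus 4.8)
The plan is to run the standard \emph{entropy method}: observe that the left-hand side is exactly the entropy functional $\mathrm{Ent}(g):=\E[g\log g]-\E[g]\log\E[g]$ evaluated at $g=e^{\lambda Z}\ge 0$ (indeed $g\log g=\lambda Z e^{\lambda Z}$, so $\mathrm{Ent}(e^{\lambda Z})=\lambda\E[Ze^{\lambda Z}]-\E[e^{\lambda Z}]\log\E[e^{\lambda Z}]$), and then bound this entropy by combining two classical pillars: (i) subadditivity of entropy over a product space (tensorization), which reduces the estimate to one coordinate at a time, and (ii) a one-dimensional variational bound for $\mathrm{Ent}$ which, evaluated at a well-chosen reference point, produces precisely the $\phi$-term on the right.

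\textbf{Step 1: tensorize.} Since $X_1,\dots,X_n$ are independent, the entropy functional is subadditive: for any nonnegative $g$ on $\{0,1\}^n$,
\[
\mathrm{Ent}(g)\ \le\ \sum_{i=1}^n\E\!\left[\mathrm{Ent}^{(i)}(g)\right],
\]
where $\mathrm{Ent}^{(i)}(g)$ is the entropy of $g$ taken only with respect to the randomness of $X_i$, with the other coordinates $X_1,\dots,X_{i-1},X_{i+1},\dots,X_n$ frozen, and the outer expectation averages over those frozen coordinates. This is the Han-type / tensorization inequality for entropy, and I expect it to be the main obstacle of the whole argument; I would either cite it directly or derive it from the chain-rule/subadditivity of relative entropy for independent coordinates, noting that it holds with no assumption on $f$ whatsoever.

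\textbf{Step 2: the one-dimensional bound.} Fix the coordinates other than $i$; then $Z_i$ is a \emph{constant} in the remaining one-dimensional problem (it does not depend on $X_i$). Use the dual representation $\mathrm{Ent}(g)=\inf_{t>0}\E[g\log g-(\log t)g-g+t]$, valid because the infimum is attained at $t=\E[g]$, with the choice $t=e^{\lambda Z_i}$ (a legitimate positive constant here). Writing $x:=-\lambda(Z-Z_i)$, so $\log g-\log t=\lambda(Z-Z_i)=-x$ and $t=e^{\lambda Z}e^{x}$, the bracketed expression collapses to
\[
e^{\lambda Z}(-x)-e^{\lambda Z}+e^{\lambda Z}e^{x}\ =\ e^{\lambda Z}\bigl(e^{x}-x-1\bigr)\ =\ e^{\lambda Z}\,\phi\bigl(-\lambda(Z-Z_i)\bigr),
\]
so $\mathrm{Ent}^{(i)}(e^{\lambda Z})\le \E^{(i)}\bigl[e^{\lambda Z}\phi(-\lambda(Z-Z_i))\bigr]$.

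\textbf{Step 3: assemble.} Plugging the Step 2 estimate into the tensorization inequality of Step 1 and taking the outer expectation over the frozen coordinates yields $\mathrm{Ent}(e^{\lambda Z})\le\sum_{i=1}^n\E[e^{\lambda Z}\phi(-\lambda(Z-Z_i))]$, which is exactly the claimed inequality. Apart from Step 1 (the subadditivity of entropy, a purely structural fact about product measures), everything is a short algebraic manipulation that becomes transparent once the reference point $t=e^{\lambda Z_i}$ is identified in the variational formula; in particular, the $f_i$'s are arbitrary because each $Z_i$ enters only as the deterministic ``best guess'' plugged into the one-dimensional bound, never through any property of $f$.
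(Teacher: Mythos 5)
Your proposal is correct and is the canonical proof of this inequality. Note, however, that the paper does not prove \autoref{lem:log-sobolev} at all --- it is stated as a citation to Massart~\cite{Mas00} --- so there is no in-paper proof to compare against. The argument you give (rewrite the left side as $\mathrm{Ent}(e^{\lambda Z})$, tensorize via subadditivity of entropy over the product measure, then apply the variational bound $\mathrm{Ent}(g)\le\E[g\log g-(\log t)g-g+t]$ coordinate-by-coordinate with the reference $t=e^{\lambda Z_i}$, which is a legitimate constant once the other coordinates are frozen) is exactly the one in Massart's paper and in the Boucheron--Lugosi--Massart monograph, and your algebra in Step~2 checks out. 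Two small remarks: the variational formula should be presented as an upper bound for each $t>0$ (with equality at $t=\E g$), not merely as an infimum representation, since you need the inequality at the specific non-optimal choice $t=e^{\lambda Z_i}$; and it is worth stating explicitly that the Bernoulli assumption plays no role --- the lemma holds for any independent $X_1,\dots,X_n$, which is why the $f_i$ can be arbitrary.
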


Whether \autoref{lem:log-sobolev} can be effectively converted into a useful differential inequality for $\E[e^{\lambda Z}]$ depends on the choice of $\{Z_i\}_{i\in [n]}$.
For a monotone function $f$, a typical choice is $Z_i = f(X_1,\ldots,X_{i-1},0,X_{i+1},\ldots,X_n)$, and previous works have demonstrated that such a conversion is possible if $f$ is $1$-Lipschitz and the following condition holds almost surely for some constants $a,b\ge 0$:~\footnote{In this case, $f$ is a so-called \emph{$(a,b)$-self-bounding} function \cite{MR06, BLM09}.}
\begin{equation}\label{eq:self-bounding}
  \sum_{i=1}^n Z-Z_i \le aZ+b.
  \tag{$\dag$}
\end{equation}

Now, given $Z^{(s)}=f(\vb{X}^{(s)})$ under a scaling factor $s > 0$, one might attempt to similarly derive a differential inequality of $\E[e^{\lambda Z^{(s)}}]$ based on \autoref{lem:log-sobolev} if the condition \eqref{eq:self-bounding} can be satisfied.
In fact, if we define $Z_i^{(s)}=f(X_1^{(s)},\ldots,X_{i-1}^{(s)},0,X_{i+1}^{(s)},\ldots,X_n^{(s)})$, the following holds:%
\[\E\left[\sum_{i=1}^n Z^{(s)} - Z^{(s)}_i\right] = -\odv{}{s} \E\left[Z^{(s)}\right].\]
Thus, if $-\odv{}{s} \E[Z^{(s)}] \le aZ^{(s)}+b$, then \eqref{eq:self-bounding} holds in expectation for $Z^{(s)}$; otherwise, $E[Z^{(s)}]$ is decreasing rapidly with respect to $s$ at that point.

As a result, either there exists some $s^* \in (0,s)$ such that \eqref{eq:self-bounding}  holds \emph{in expectation} for $Z^{(s^*)}$, or $\E[Z^{(s)}]$ becomes significantly smaller than $\E[Z^{(0)}]$.
Intuitively, the latter case should directly imply a bicriterion concentration result, leaving only the former case to be addressed.\footnote{If we do not aim for an exponential tail bound, these observations indeed suffice to get a Chebyshev-type bicriterion concentration inequality for $Z^{(s)}$, by using Efron-Stein inequality to bound its variance in the former case.} 
However, it turns out that such a use of \autoref{lem:log-sobolev} crucially depends on \eqref{eq:self-bounding} holding \emph{almost surely}, which is not applicable to such $Z^{(s^*)}$ in the former case.\footnote{Specifically, applying the entropy method for $\E[e^{\lambda Z}]$ requires $\sum_{i=1}^n \E[ e^{\lambda Z} (Z-Z_i)] \le \E[e^{\lambda Z} (aZ+b)]$ for every $\lambda$, which might be false even if \eqref{eq:self-bounding} holds with very high probability.}

Given this limitation, rather than working with moment-generating functions directly, we propose an alternative approach. 
Our key idea is to relate \autoref{lem:log-sobolev} with the following unconventional function, defined for every $\lambda \ge 0$:
\[F(\lambda)=\E\left[ e^{\lambda Z^{(\lambda)}} \right].\]
Note that this is not a moment-generating function, as $\lambda$ here also serves as the scaling factor of $Z$, causing the random variable $Z^{(\lambda)}$ to change with it.
Surprisingly, we can obtain the following upper bound for the derivative of $F(\lambda)$ that aligns well with \autoref{lem:log-sobolev}.

\begin{lemma}\label{lem:derivative-upper-bound}
    Given $n$ independent Bernoulli random variables $\vb{X}$ and a monotone $1$-Lipschitz function $f: \{0,1\}^n \to \mathbb{R}$.
    For any $\lambda \in (0,1]$,
    \[F'(\lambda) \le \E\left[Z^{(\lambda)} e^{\lambda Z^{(\lambda)}} \right] - \frac{1}{\lambda} \sum_{i=1}^n \E\left[ e^{\lambda Z^{(\lambda)}} \phi(-\lambda(Z^{(\lambda)} - Z_i^{(\lambda)}))\right]\]
    where $Z^{(\lambda)}=f(\vb{X}^{(\lambda)})$, $Z_i^{(\lambda)}=f(X_1^{(\lambda)},\ldots,X_{i-1}^{(\lambda)},0,X_{i+1}^{(\lambda)},\ldots,X_n^{(\lambda)})$, and $F(\lambda)=\E[ e^{\lambda Z^{(\lambda)}} ]$.
\end{lemma}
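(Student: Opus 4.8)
The plan is to compute $F'(\lambda)$ in closed form, observe that it splits into a ``value'' term plus a correction coming from how the law of $\vb{X}^{(\lambda)}$ itself moves with $\lambda$, and then verify that this correction is dominated by the $\phi$-sum on the right-hand side via an elementary pointwise estimate. \textbf{Step 1 (differentiate $F$).} Since $\{0,1\}^n$ is finite, $F(\lambda)=\sum_{\vb{x}}\Pr[\vb{X}^{(\lambda)}=\vb{x}]\,e^{\lambda f(\vb{x})}$ is smooth, and $\lambda$ enters both through the exponent and through the probabilities $\prod_i (e^{-\lambda}p_i)^{x_i}(1-e^{-\lambda}p_i)^{1-x_i}$. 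Differentiating the exponent contributes $\E[Z^{(\lambda)}e^{\lambda Z^{(\lambda)}}]$. For the other part, the clean viewpoint is to realize $(\vb{X}^{(\lambda)})_{\lambda\ge0}$ as a continuous-time Markov ``death'' process in which each coordinate currently equal to $1$ independently flips to $0$ at rate $1$ — this is precisely why the scaling $e^{-\lambda}\vb{p}$ is the right choice, since the hazard rate of each surviving coordinate is $-\frac{d}{d\lambda}\log(e^{-\lambda}p_i)=1$. By Kolmogorov's forward equation the contribution of the moving measure is $\E\big[\sum_{i:X_i^{(\lambda)}=1}(e^{\lambda f(\vb{X}^{(\lambda)}-\vb{e}_i)}-e^{\lambda f(\vb{X}^{(\lambda)})})\big]$; since $f(\vb{X}^{(\lambda)}-\vb{e}_i)=Z_i^{(\lambda)}$ whenever $X_i^{(\lambda)}=1$ while the summand vanishes when $X_i^{(\lambda)}=0$, this equals $\sum_{i}\E[e^{\lambda Z_i^{(\lambda)}}-e^{\lambda Z^{(\lambda)}}]$. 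Altogether,
\[F'(\lambda)=\E\big[Z^{(\lambda)}e^{\lambda Z^{(\lambda)}}\big]+\sum_{i=1}^n\E\big[e^{\lambda Z_i^{(\lambda)}}-e^{\lambda Z^{(\lambda)}}\big].\]
(The same identity also drops out of a direct product-rule calculation on the product measure, without the process language.)

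\textbf{Step 2 (rewrite the target).} Set $D_i:=Z^{(\lambda)}-Z_i^{(\lambda)}$. Expanding $\phi(x)=e^x-x-1$ gives $e^{\lambda Z^{(\lambda)}}\phi(-\lambda D_i)=e^{\lambda Z_i^{(\lambda)}}-e^{\lambda Z^{(\lambda)}}+\lambda D_i e^{\lambda Z^{(\lambda)}}$, so the asserted right-hand side of the lemma equals
\[\E\big[Z^{(\lambda)}e^{\lambda Z^{(\lambda)}}\big]-\tfrac1\lambda\sum_{i=1}^n\E\big[e^{\lambda Z_i^{(\lambda)}}-e^{\lambda Z^{(\lambda)}}\big]-\sum_{i=1}^n\E\big[D_i e^{\lambda Z^{(\lambda)}}\big].\]
Subtracting the expression for $F'(\lambda)$ from Step 1, the desired inequality of the lemma becomes equivalent to
\[\sum_{i=1}^n\E\big[D_i e^{\lambda Z^{(\lambda)}}\big]\le\Big(1+\tfrac1\lambda\Big)\sum_{i=1}^n\E\big[e^{\lambda Z^{(\lambda)}}\big(1-e^{-\lambda D_i}\big)\big].\]

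\textbf{Step 3 (an elementary bound).} By monotonicity $D_i\ge0$ and by $1$-Lipschitzness $D_i\le1$, and $e^{\lambda Z^{(\lambda)}}\ge0$, so it suffices to prove the pointwise inequality $t\le(1+\tfrac1\lambda)(1-e^{-\lambda t})$ for all $t\in[0,1]$ and $\lambda\in(0,1]$. Writing $u=\lambda t\in[0,\lambda]$ this reads $\tfrac{u}{1+\lambda}\le1-e^{-u}$, which follows from $1-e^{-u}\ge u-\tfrac{u^2}{2}\ge u\big(1-\tfrac\lambda2\big)$ together with $\big(1-\tfrac\lambda2\big)(1+\lambda)=1+\tfrac\lambda2(1-\lambda)\ge1$ for $\lambda\in[0,1]$.

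The main obstacle is Step 1: recognizing that although $F(\lambda)=\E[e^{\lambda Z^{(\lambda)}}]$ is not a moment-generating function, its derivative nonetheless decomposes into a value term plus a ``discrete-gradient'' term of exactly the shape that appears in the modified logarithmic Sobolev inequality (\autoref{lem:log-sobolev}), and that this alignment is bought precisely by the rate-$1$ death interpretation of the $e^{-\lambda}$ scaling. Once $F'(\lambda)$ is in hand, Steps 2--3 are just bookkeeping and a convexity estimate.
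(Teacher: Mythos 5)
Your proof is correct, and it takes essentially the same route as the paper's: you first establish the exact formula $F'(\lambda) = \E[Z^{(\lambda)}e^{\lambda Z^{(\lambda)}}] - \sum_{i}\E[e^{\lambda Z^{(\lambda)}}-e^{\lambda Z_i^{(\lambda)}}]$ (the paper's \autoref{lem:derivative-of-exp-scaled-expectation}, though you offer an attractive death-process/Kolmogorov-forward-equation interpretation of the scaling, which the paper derives by the chain rule on the product measure), and then reduce the lemma to a pointwise scalar inequality between $e^{\lambda Z^{(\lambda)}}-e^{\lambda Z_i^{(\lambda)}}$ and $\tfrac{1}{\lambda}e^{\lambda Z^{(\lambda)}}\phi(-\lambda D_i)$.

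The one genuine organizational difference is in how that pointwise estimate is established. The paper factors it through an intermediate bound $e^{\lambda Z^{(\lambda)}}-e^{\lambda Z_i^{(\lambda)}} \ge \lambda e^{-\lambda}\,e^{\lambda Z^{(\lambda)}}D_i$ (\autoref{lem:exp-self-bounding}, using $e^x-1\ge x$ and $Z_i^{(\lambda)}\ge Z^{(\lambda)}-1$), and then closes the gap with two elementary facts about $\phi$ (\autoref{fact:phi-by-lambda} and \autoref{fact:phi-lambda-x}). You instead expand $\phi$ to rewrite the desired inequality directly as $D_i \le (1+\tfrac{1}{\lambda})(1-e^{-\lambda D_i})$, i.e.\ the single scalar inequality $\tfrac{u}{1+\lambda}\le 1-e^{-u}$ for $u\in[0,\lambda]$, which you verify by the Taylor bound $1-e^{-u}\ge u-\tfrac{u^2}{2}$. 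Unwinding the paper's three lemmas shows they prove exactly this same inequality, so the two arguments coincide at the end; yours is a bit more streamlined in that it makes the final target inequality explicit in one step rather than three. Both are correct and equally elementary.
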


By combining \autoref{lem:derivative-upper-bound} with \autoref{lem:log-sobolev}, we can conclude that for all $\lambda \in (0,1]$,
\[\lambda F'(\lambda) \le F(\lambda) \log F(\lambda).\]
Solving this differential inequality provides an upper bound for $F(\lambda)$. \autoref{thm:newconcentration} then follows by applying Markov's inequality to the random variable $e^{s Z^{(s)}}$.

\subsection{Proof of \autoref{thm:newconcentration}}

Let $Z^{(\lambda)}=f(\vb{X}^{(\lambda)})$ and $Z_i^{(\lambda)}=f(X_1^{(\lambda)},\ldots,X_{i-1}^{(\lambda)},0,X_{i+1}^{(\lambda)},\ldots,X_n^{(\lambda)})$ throughout the proof.
Given \autoref{lem:log-sobolev} and \autoref{lem:derivative-upper-bound}, it is not hard to show the bicriterion concentration inequality.

\begin{proof}[Proof of \autoref{thm:newconcentration}]
    For any $\lambda > 0$, we apply \autoref{lem:log-sobolev} to $Z^{(\lambda)}$ and $\{Z_i^{(\lambda)}\}_{i \in [n]}$ and obtain
    \[\lambda \E\left[Z^{(\lambda)} e^{\lambda Z^{(\lambda)}}\right] - \E\left[e^{\lambda Z^{(\lambda)}}\right] \log \E\left[e^{\lambda Z^{(\lambda)}}\right] 
        \le \sum_{i=1}^n \E\left[ e^{\lambda Z^{(\lambda)}}\phi(-\lambda(Z^\lambda-Z_i^{(\lambda)}))\right]\]
    where $\phi(x)=e^x-x-1$. Rearranging the inequality, we have
    \[\lambda \left(\E\left[Z^{(\lambda)} e^{\lambda Z^{(\lambda)}}\right] -
    \frac{1}{\lambda} \sum_{i=1}^n \E\left[ e^{\lambda Z^{(\lambda)}}\phi(-\lambda(Z^\lambda-Z_i^{(\lambda)}))\right] \right)
        \le \E\left[e^{\lambda Z^{(\lambda)}}\right] \log \E\left[e^{\lambda Z^{(\lambda)}}\right].\]
    Together with \autoref{lem:derivative-upper-bound}, this gives us the following differential inequality for $F(\lambda)=\E[ e^{\lambda Z^{(\lambda)}} ]$:
    \[\lambda F'(\lambda) \le F(\lambda) \log F(\lambda), \quad \forall \lambda \in (0,1].\]
    And by letting $G(\lambda)=\log F(\lambda)$, we can rewrite the inequality as 
    \[\lambda G'(\lambda) \le G(\lambda), \quad \forall \lambda \in (0,1].\]
    
    Note that $G_0(\lambda) = \lambda \E[Z^{(0)}]$ is a solution to $\lambda G'(\lambda)= G(\lambda)$ for $\lambda \in (0,1]$.
    Define $g(\lambda) = \frac{G(\lambda)-G_0(\lambda)}{\lambda}$ and we have
    \[g'(\lambda) = \frac{G'(\lambda) - G_0'(\lambda)}{\lambda} - \frac{G(\lambda)-G_0(\lambda)}{\lambda^2} = \frac{(\lambda G'(\lambda) -G(\lambda)) - (\lambda G'_0(\lambda)-G_0(\lambda))}{\lambda^2}\le 0.\]
    Also note that $\lim_{\lambda \to 0^+} \frac{G(\lambda)}{\lambda} = G'(0) = \frac{F'(0)}{F(0)} = \E[Z^{(0)}]$ (where the last equality holds by \autoref{lem:derivative-of-exp-scaled-expectation}), therefore $\lim_{\lambda \to 0^+} g(\lambda) = 0$.
    Combining this with $g' \le 0$, we conclude that $g$ is non-positive on $(0,1]$. In other words, for $\lambda \in (0,1]$,
    \[G(\lambda) \le G_0(\lambda) = \lambda \E[Z^{(0)}].\]
    
    Finally, by Markov's inequality, we conclude that for any $\lambda \in (0,1]$ and $t>0$,
    \begin{equation*}
      \Pr\left[Z^{(\lambda)} \ge \E[Z^{(0)}]+t\right] = \Pr\left[e^{\lambda Z^{(\lambda)}} \ge e^{\lambda (\E[Z^{(0)}]+t)}\right] \le \frac{\E[e^{\lambda Z^{(\lambda)}}]}{e^{\lambda (\E[Z^{(0)}]+t)}} \le e^{-\lambda t}. \qedhere  
    \end{equation*}
\end{proof}

We are left to prove \autoref{lem:derivative-upper-bound}.
Let us first compute $F'(\lambda)$ by definition.

\begin{lemma}\label{lem:derivative-of-exp-scaled-expectation}
    For any $\lambda\ge 0$, $F'(\lambda) =  \E[Z^{(\lambda)} e^{\lambda Z^{(\lambda)}}]-\sum_{i=1}^n \E[ e^{\lambda Z^{(\lambda)}}-e^{\lambda Z_i^{(\lambda)}}]$.
\end{lemma}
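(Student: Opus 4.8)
The plan is to differentiate $F(\lambda)=\E[e^{\lambda Z^{(\lambda)}}]$ directly from its definition, being careful that $\lambda$ enters in two places: as the exponent multiplier and as the scaling factor of the Bernoulli parameters. First I would write the expectation out explicitly as a finite sum over $\vb{x}\in\{0,1\}^n$:
\[
F(\lambda)=\sum_{\vb{x}\in\{0,1\}^n} e^{\lambda f(\vb{x})}\prod_{i=1}^n \big(e^{-\lambda}p_i\big)^{x_i}\big(1-e^{-\lambda}p_i\big)^{1-x_i}.
\]
Since this is a finite sum of smooth functions of $\lambda$ (for $\lambda\ge 0$, each factor $1-e^{-\lambda}p_i\ge 1-p_i\ge 0$ is well-behaved, and at points where $p_i=1$ one checks the $\lambda\to 0^+$ limit separately), term-by-term differentiation is justified. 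The derivative then splits by the product rule into the ``exponent part'' $\sum_{\vb{x}} f(\vb{x})e^{\lambda f(\vb{x})}\Pr[\vb{X}^{(\lambda)}=\vb{x}] = \E[Z^{(\lambda)}e^{\lambda Z^{(\lambda)}}]$, plus a ``scaling part'' coming from $\odv{}{\lambda}$ of the product of Bernoulli weights.

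The key step is to identify the scaling part with $-\sum_{i=1}^n \E[e^{\lambda Z^{(\lambda)}}-e^{\lambda Z_i^{(\lambda)}}]$. For a single coordinate $i$, write the weight for $\vb{X}^{(\lambda)}$ as $w_i(\lambda)\cdot(\text{weights of the other coordinates})$, where $w_i(\lambda)=e^{-\lambda}p_i$ if $x_i=1$ and $w_i(\lambda)=1-e^{-\lambda}p_i$ if $x_i=0$; note $\odv{w_i}{\lambda}=-e^{-\lambda}p_i$ when $x_i=1$ and $\odv{w_i}{\lambda}=e^{-\lambda}p_i$ when $x_i=0$. Summing the contribution of coordinate $i$ over all $\vb{x}$ and grouping the $x_i=1$ and $x_i=0$ terms with a common value of the remaining coordinates, the $-e^{-\lambda}p_i$ (from $x_i=1$) and $+e^{-\lambda}p_i$ (from $x_i=0$) combine so that the $x_i=0$ term carries coefficient $+e^{-\lambda}p_i$ while its copy with $x_i=1$ carries $-e^{-\lambda}p_i$. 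After reweighting by the correct Bernoulli factor for coordinate $i$, this telescopes into exactly $-\E\big[e^{\lambda f(\vb{X}^{(\lambda)})}-e^{\lambda f(X_1^{(\lambda)},\dots,0,\dots,X_n^{(\lambda)})}\big] = -\E[e^{\lambda Z^{(\lambda)}}-e^{\lambda Z_i^{(\lambda)}}]$; this is the standard computation underlying the fact that $\odv{}{\lambda}\E[g(\vb{X}^{(\lambda)})] = -e^{-\lambda}\sum_i p_i\,\E[g(\dots,1,\dots)-g(\dots,0,\dots)]$, specialized to $g(\vb{x})=e^{\lambda f(\vb{x})}$ with $\lambda$ held fixed in this sub-computation.

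I expect the main obstacle to be purely bookkeeping: cleanly organizing the two sources of $\lambda$-dependence so that the chain rule is applied correctly, and in particular verifying that treating $\lambda$ as ``frozen'' inside $e^{\lambda f(\vb{x})}$ while differentiating the scaling, and vice versa, and then adding the two contributions, is legitimate — which it is, since $F(\lambda)=\Phi(\lambda,\lambda)$ for $\Phi(\mu,\nu)=\sum_{\vb{x}} e^{\mu f(\vb{x})}\Pr[\vb{X}^{(\nu)}=\vb{x}]$ smooth in both arguments on $[0,1]^2$, so $F'(\lambda)=\partial_\mu\Phi+\partial_\nu\Phi$ evaluated at $(\lambda,\lambda)$. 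One should also remark that monotonicity and $1$-Lipschitzness of $f$ are not needed for this particular lemma — it is a pure calculus identity — though they will be used in the subsequent Lemma~\ref{lem:derivative-upper-bound} to bound the resulting expression.
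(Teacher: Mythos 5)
Your proposal is correct and takes essentially the same approach as the paper: both split the two sources of $\lambda$-dependence and apply the chain rule (the paper phrases it via a function $h(t,\vb{q})=\E_{\vb{Y}\sim\Ber(\vb{q})}[e^{tf(\vb{Y})}]$ with $F(\lambda)=h(\lambda,e^{-\lambda}\vb{p})$, which is precisely your $\Phi(\mu,\nu)$ reparameterization), and both compute $\partial_{q_i} h$ by conditioning on coordinate $i$ to obtain the difference $\E[e^{\lambda f_{i,1}}-e^{\lambda f_{i,0}}]$ and then reabsorb the factor $e^{-\lambda}p_i$ to get $\E[e^{\lambda Z^{(\lambda)}}-e^{\lambda Z_i^{(\lambda)}}]$. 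Your observation that monotonicity and 1-Lipschitzness are not used here also matches the paper, which invokes those properties only in Lemma~\ref{lem:derivative-upper-bound}.
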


\begin{proof}
    Define a function $h:\mathbb{R}\times [0,1]^{n} \to \mathbb{R}$ as
    \[h(t,\vb{q})=\E_{\vb{Y} \sim \Ber(\vb{q})}\left[e^{t f(\vb{Y})}\right].\]
    For each $i \in [n]$ and $b \in \{0,1\}$, denote $f_{i,b}(\vb{Y}) = f(Y_1,\ldots,Y_{i-1},b,Y_{i+1},\ldots,Y_n)$
    and we can compute the partial derivative of $h$ with respect to $q_i$ as
    \begin{align*}
        \pdv{}{q_i} h(t, \vb{q}) 
        &= \pdv{}{q_i}\left(q_i \E_{\vb{Y} \sim \Ber(\vb{q})} [e^{t f_{i,1}(\vb{Y})}] + (1-q_i) \E_{\vb{Y} \sim \Ber(\vb{q})} [e^{t f_{i,0}(\vb{Y})}]\right)\\
        &=\E_{\vb{Y} \sim \Ber(\vb{q})} [e^{t f_{i,1}(\vb{Y})} - e^{t f_{i,0}(\vb{Y})}].
    \end{align*}

    Recall that $\vb{X} \sim \Ber(\vb{p})$ and $F(\lambda)=\E[ e^{\lambda Z^{(\lambda)}} ]=h(\lambda,e^{-\lambda}\vb{p})$. Therefore,
    \begin{align*}
        F'(\lambda) 
        &= \odv{t}{\lambda} \cdot \pdv{}{t} h(\lambda,e^{-\lambda}\vb{p}) +\sum_{i=1}^n \odv{q_i}{\lambda} \cdot \pdv{}{q_i} h(\lambda,e^{-\lambda} \vb{p})\\
        &=1 \cdot \E\left[f(\vb{X}^{(\lambda)}) e^{\lambda f(\vb{X}^{(\lambda)})}\right] + \sum_{i=1}^n (-e^{-\lambda} p_i) \cdot \E_{\vb{Y} \sim \Ber(e^{-\lambda} \vb{p})}\left[e^{\lambda f_{i,1}(\vb{Y})} - e^{\lambda f_{i,0}(\vb{Y})}\right]\\
        &=\E\left[f(\vb{X}^{(\lambda)}) e^{\lambda f(\vb{X}^{(\lambda)})}\right] - \sum_{i=1}^n \E_{\vb{Y} \sim \Ber(e^{-\lambda} \vb{p})} \left[e^{\lambda f(\vb{Y})} - e^{\lambda f_{i,0}(\vb{Y})}\right]\\
        &=\E\left[Z^{(\lambda)} e^{\lambda Z^{(\lambda)}}\right] - \sum_{i=1}^n \E\left[e^{\lambda Z^{(\lambda)}} - e^{\lambda Z_i^{(\lambda)}}\right]. \qedhere
    \end{align*}
\end{proof}

Then we further derive a lower bound to the latter term, $\sum_{i=1}^n \E[ e^{\lambda Z^{(\lambda)}}-e^{\lambda Z_i^{(\lambda)}}]$.

\begin{lemma}\label{lem:exp-self-bounding}
    For any $\lambda \ge 0$,
    $\sum_{i=1}^n \E[ e^{\lambda Z^{(\lambda)}} - e^{\lambda Z^{(\lambda)}_i}] \ge \lambda e^{-\lambda} \sum_{i=1}^n \E[ e^{\lambda Z^{(\lambda)}} (Z^{(\lambda)} - Z_i^{(\lambda)})].$
\end{lemma}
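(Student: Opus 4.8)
The plan is to reduce the claimed inequality to an elementary scalar inequality, verify it pointwise (for each realization of $\vb{X}^{(\lambda)}$ and each coordinate $i$), and then take expectations and sum over $i\in[n]$. So I would first fix a realization of $\vb{X}^{(\lambda)}$ and an index $i\in[n]$, and set $d_i := Z^{(\lambda)} - Z_i^{(\lambda)}$. Since $f$ is monotone, $d_i \ge 0$; since $Z^{(\lambda)}$ and $Z_i^{(\lambda)}$ agree in all coordinates except the $i$-th (where the former is $X_i^{(\lambda)}\in\{0,1\}$ and the latter is $0$) and $f$ is $1$-Lipschitz, we also have $d_i \le 1$. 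Dividing through by $e^{\lambda Z^{(\lambda)}}>0$, the per-term inequality
\[ e^{\lambda Z^{(\lambda)}} - e^{\lambda Z_i^{(\lambda)}} \;\ge\; \lambda e^{-\lambda}\, e^{\lambda Z^{(\lambda)}}\big(Z^{(\lambda)} - Z_i^{(\lambda)}\big) \]
is equivalent to the scalar claim $1 - e^{-\lambda d} \ge \lambda e^{-\lambda} d$ for all $d\in[0,1]$ and $\lambda\ge 0$.

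To prove the scalar claim I would set $g(d) := 1 - e^{-\lambda d} - \lambda e^{-\lambda} d$, observe $g(0)=0$, and compute $g'(d) = \lambda\big(e^{-\lambda d} - e^{-\lambda}\big)$, which is nonnegative on $[0,1]$ precisely because $d\le 1$ and $\lambda\ge 0$. Hence $g\ge 0$ on $[0,1]$, which gives the per-term inequality for every realization and every $i$. Taking expectations of each term and summing over $i\in[n]$ then yields exactly the statement of the lemma; combined with \autoref{lem:derivative-of-exp-scaled-expectation} this is what feeds into \autoref{lem:derivative-upper-bound}.

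I do not expect a genuine obstacle here: the only place that needs care is where the hypotheses enter — monotonicity of $f$ forces $d_i\ge 0$ and the $1$-Lipschitz property forces $d_i\le 1$, and it is exactly this upper bound $d_i\le 1$ that makes $g'$ nonnegative on the relevant interval. Dropping the Lipschitz bound would break the argument entirely; everything else is a single line of calculus together with monotonicity of expectation.
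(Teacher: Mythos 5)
Your proof is correct and takes essentially the same route as the paper's: both reduce to a pointwise scalar inequality in $d = Z^{(\lambda)} - Z_i^{(\lambda)} \in [0,1]$ (the paper uses $e^x - 1 \ge x$ together with $e^{\lambda Z_i^{(\lambda)}} \ge e^{-\lambda} e^{\lambda Z^{(\lambda)}}$, which after dividing by $e^{\lambda Z^{(\lambda)}}$ is exactly your $1 - e^{-\lambda d} \ge \lambda e^{-\lambda} d$), and both invoke monotonicity for $d \ge 0$ and $1$-Lipschitzness for $d \le 1$. Your calculus verification of the scalar inequality is a minor repackaging of the paper's two elementary inequalities.
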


\begin{proof}
    We prove the inequality for each term separately and without expectation.
    For any $i \in [n]$, note that
    \[e^{\lambda Z^{(\lambda)}} - e^{\lambda Z^{(\lambda)}_i}=e^{\lambda Z^{(\lambda)}_i}(e^{\lambda (Z^{(\lambda)}-Z^{(\lambda)}_i)}-1) \ge e^{\lambda Z^{(\lambda)}_i} \cdot \lambda (Z^{(\lambda)}-Z^{(\lambda)}_i)\]
    since $e^x-1 \ge x$. 
    Meanwhile, we know $Z_i^{(\lambda)} \ge Z^{(\lambda)}-1$ by $1$-Lipschitzness of $f$. 
    Therefore, 
    \begin{equation*}
       e^{\lambda Z^{(\lambda)}} - e^{\lambda Z^{(\lambda)}_i} \ge \lambda e^{-\lambda}  \cdot e^{\lambda Z^{(\lambda)}} (Z^{(\lambda)} - Z_i^{(\lambda)}).  \qedhere
    \end{equation*}
\end{proof}

The following two facts of the function $\phi(x)=e^x-x-1$ will also be used.

\begin{fact}\label{fact:phi-by-lambda}
    For any $\lambda \in (0,1]$, $\frac{\phi(-\lambda)}{\lambda} \le \lambda e^{-\lambda}$.
\end{fact}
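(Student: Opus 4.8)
The plan is to prove the elementary inequality $\frac{\phi(-\lambda)}{\lambda} \le \lambda e^{-\lambda}$ for $\lambda \in (0,1]$ by unwinding the definition $\phi(x) = e^x - x - 1$ and reducing to a one-variable estimate. Concretely, $\phi(-\lambda) = e^{-\lambda} + \lambda - 1$, so after dividing by $\lambda > 0$ the claim becomes $e^{-\lambda} + \lambda - 1 \le \lambda^2 e^{-\lambda}$, i.e. $(1-\lambda^2) e^{-\lambda} \le 1 - \lambda$. Since $\lambda \in (0,1]$ we have $1 - \lambda \ge 0$ and $1 - \lambda^2 = (1-\lambda)(1+\lambda) \ge 0$, so when $\lambda = 1$ both sides vanish and the inequality holds; for $\lambda \in (0,1)$ we may divide through by $1 - \lambda > 0$ to obtain the equivalent form $(1+\lambda) e^{-\lambda} \le 1$.

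Thus the whole statement reduces to showing $g(\lambda) := (1+\lambda) e^{-\lambda} \le 1$ for $\lambda \in (0,1)$ (indeed for all $\lambda \ge 0$). First I would note $g(0) = 1$, and then compute $g'(\lambda) = e^{-\lambda} - (1+\lambda) e^{-\lambda} = -\lambda e^{-\lambda} \le 0$, so $g$ is non-increasing on $[0,\infty)$ and hence $g(\lambda) \le g(0) = 1$ on the relevant range. Tracing back through the equivalences gives $\frac{\phi(-\lambda)}{\lambda} \le \lambda e^{-\lambda}$, as desired. An equally clean alternative is to use the power series: $\phi(-\lambda) = \sum_{j \ge 2} \frac{(-\lambda)^j}{j!}$, and one can bound this term-by-term against $\lambda \cdot \lambda e^{-\lambda} = \lambda^2 \sum_{j\ge 0} \frac{(-\lambda)^j}{j!}$ by comparing coefficients of $\lambda^j$, but the monotonicity argument is shorter and avoids worrying about alternating-series sign bookkeeping.

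There is essentially no main obstacle here: this is a routine calculus lemma, and the only mild care needed is handling the boundary case $\lambda = 1$ (where the division by $1-\lambda$ is invalid) separately, or — better — avoiding the division altogether by directly proving $(1-\lambda^2)e^{-\lambda} \le 1 - \lambda$ via the observation that it is implied by $(1+\lambda)e^{-\lambda} \le 1$ together with $1 - \lambda \ge 0$ (multiplying the first inequality by the nonnegative quantity $1-\lambda$). I would write the final proof in this last form to keep it valid for all $\lambda \in (0,1]$ uniformly.
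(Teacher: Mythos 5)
Your proof is correct. The paper states this as a \emph{fact} without any accompanying proof, so there is nothing in the paper to compare against; your elementary calculus argument (rewrite the claim as $(1-\lambda^2)e^{-\lambda} \le 1-\lambda$, handle $\lambda = 1$ trivially, and for $\lambda \in (0,1)$ reduce to $(1+\lambda)e^{-\lambda} \le 1$, which follows since $g(\lambda) = (1+\lambda)e^{-\lambda}$ has $g(0)=1$ and $g'(\lambda) = -\lambda e^{-\lambda} \le 0$) is the natural way to supply one, and your final suggestion to multiply $(1+\lambda)e^{-\lambda} \le 1$ by the nonnegative quantity $1-\lambda$ rather than dividing cleanly handles the endpoint. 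One small note: your aside about the power-series alternative understates the bookkeeping needed — a naive term-by-term coefficient comparison of $\phi(-\lambda)$ against $\lambda^2 e^{-\lambda}$ does not work because the series alternate in sign (e.g.\ the $\lambda^3$ coefficients are $-1/6$ and $-1$ respectively, going the ``wrong way''), so the monotonicity route you actually commit to is indeed the right choice.
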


\begin{fact}\label{fact:phi-lambda-x}
    For any $\lambda \in \mathbb{R}$ and $x \in [0,1]$, $\phi(-\lambda x) \le \phi(-\lambda) x$.
\end{fact}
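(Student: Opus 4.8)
The plan is to obtain this purely from convexity of $\phi$ together with the fact that $\phi$ vanishes at the origin. First I would record the two elementary properties: $\phi''(t)=e^{t}>0$ for every $t\in\mathbb{R}$, so $\phi$ is convex on all of $\mathbb{R}$; and $\phi(0)=e^{0}-0-1=0$. Next, fixing $\lambda\in\mathbb{R}$ and $x\in[0,1]$, I would write the point $-\lambda x$ as the convex combination
\[
-\lambda x=(1-x)\cdot 0+x\cdot(-\lambda),
\]
which is legitimate since the weights $1-x$ and $x$ are nonnegative and sum to $1$, regardless of the sign of $\lambda$ (if $\lambda>0$ the point $-\lambda x$ lies between $-\lambda$ and $0$; if $\lambda<0$ it lies between $0$ and $-\lambda$; if $\lambda=0$ both sides of the target inequality equal $\phi(0)=0$). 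Applying convexity of $\phi$ to this two-point combination then gives
\[
\phi(-\lambda x)\le(1-x)\,\phi(0)+x\,\phi(-\lambda)=x\,\phi(-\lambda),
\]
which is exactly the claim.

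There is essentially no obstacle here: the only things to be careful about are invoking convexity on the correct interval (which the convex-combination representation handles uniformly in the sign of $\lambda$) and noting that $x=0$ is a trivial boundary case. For completeness I might instead give the equivalent one-variable argument: with $\lambda$ fixed, set $g(x)=x\,\phi(-\lambda)-\phi(-\lambda x)$; then $g(0)=g(1)=0$ and $g''(x)=-\lambda^{2}\phi''(-\lambda x)=-\lambda^{2}e^{-\lambda x}\le 0$, so $g$ is concave on $[0,1]$ and hence lies above the chord through $(0,0)$ and $(1,0)$, i.e.\ $g\ge 0$ on $[0,1]$. But the convexity-plus-$\phi(0)=0$ route is shorter, and it is the one I would write up.
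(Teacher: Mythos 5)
Your proof is correct. The paper states \autoref{fact:phi-lambda-x} without proof, treating it as an elementary observation, so there is no ``official'' argument to compare against. Your convexity argument --- $\phi'' = e^t > 0$, $\phi(0) = 0$, and $-\lambda x = (1-x)\cdot 0 + x\cdot(-\lambda)$ --- is the natural and standard way to establish it, and your handling of the sign of $\lambda$ is careful and correct; the alternative concavity-of-$g$ argument is an equivalent repackaging. Nothing is missing.
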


Now we are ready to prove the lemma.

\begin{proof}[Proof of \autoref{lem:derivative-upper-bound}]
    We upper bound $F'(\lambda)$ step-by-step as follows:
    \begin{align*}
      F'(\lambda) 
      &= \E[Z^{(\lambda)} e^{\lambda Z^{(\lambda)}}]-  \sum_{i=1}^n \E\left[ e^{\lambda Z^{(\lambda)}} - e^{\lambda Z_i^{(\lambda)}}\right] \tag{\autoref{lem:derivative-of-exp-scaled-expectation}}\\
      &\le \E[Z^{(\lambda)} e^{\lambda Z^{(\lambda)}}]- \lambda e^{-\lambda} \sum_{i=1}^n \E\left[ e^{\lambda Z^{(\lambda)}} (Z^{(\lambda)} - Z_i^{(\lambda)})\right] \tag{\autoref{lem:exp-self-bounding}} \\
      &\le \E[Z^{(\lambda)} e^{\lambda Z^{(\lambda)}}]- \frac{\phi(-\lambda)}{\lambda} \sum_{i=1}^n \E\left[ e^{\lambda Z^{(\lambda)}} (Z^{(\lambda)} - Z_i^{(\lambda)})\right] \tag{\autoref{fact:phi-by-lambda}} \\
      &\le \E[Z^{(\lambda)} e^{\lambda Z^{(\lambda)}}]- \frac{1}{\lambda} \sum_{i=1}^n \E\left[ e^{\lambda Z^{(\lambda)}} \phi(-\lambda(Z^{(\lambda)} - Z_i^{(\lambda)}))\right] \tag{\autoref{fact:phi-lambda-x}}
    \end{align*}
    where in the last step we also use the fact that $Z^{(\lambda)}-Z_i^{(\lambda)} \in [0,1]$, as $f$ is monotone and $1$-Lipschitz.
\end{proof}

\bibliography{main}

\appendix

\section{Useful Concentration Inequalities}
\label{app:concentration}
\begin{theorem}[Multiplicative Chernoff bound]
\label{thm:chernoff}
   Given $n$ independent Bernoulli random variables $X_1,X_2,\ldots,X_n$, let $X=\sum_{i=1}^n X_i$ denote their sum.
    For any $\delta > 0$, we have
    \[\Pr[X \ge (1+\delta)\E[X]] \le \exp\left(-\frac{\delta^2\E[X]}{2+\delta}\right).\]
\end{theorem}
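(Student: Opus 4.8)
The plan is the textbook exponential-moment (Chernoff) argument. Write $\mu := \E[X] = \sum_{i=1}^{n} p_i$, where $p_i := \Pr[X_i = 1]$. First I would apply Markov's inequality to the nonnegative random variable $e^{tX}$ for a parameter $t > 0$ to be chosen later:
\[\Pr[X \ge (1+\delta)\mu] \;=\; \Pr\!\left[e^{tX} \ge e^{t(1+\delta)\mu}\right] \;\le\; \frac{\E[e^{tX}]}{e^{t(1+\delta)\mu}}.\]
Then, using independence, $\E[e^{tX}] = \prod_{i=1}^{n}\E[e^{tX_i}] = \prod_{i=1}^{n}\bigl(1 + p_i(e^t-1)\bigr)$, and bounding each factor by $e^{p_i(e^t-1)}$ via $1+x \le e^x$, I obtain $\E[e^{tX}] \le e^{\mu(e^t-1)}$, hence
\[\Pr[X \ge (1+\delta)\mu] \;\le\; \exp\!\bigl(\mu(e^t-1) - t(1+\delta)\mu\bigr) \qquad \text{for all } t>0.\]

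Next I would optimize the exponent over $t$; the minimizer is $t = \ln(1+\delta) > 0$, which yields the familiar bound $\Pr[X \ge (1+\delta)\mu] \le \bigl(e^{\delta}/(1+\delta)^{1+\delta}\bigr)^{\mu}$. It then remains to show this is at most $\exp\!\bigl(-\delta^2\mu/(2+\delta)\bigr)$; after taking logarithms and dividing by $\mu$, this reduces to the scalar inequality
\[(1+\delta)\ln(1+\delta) \;\ge\; \delta + \frac{\delta^2}{2+\delta} \;=\; \frac{2\delta(1+\delta)}{2+\delta}, \qquad \delta > 0,\]
equivalently $\ln(1+\delta) \ge \tfrac{2\delta}{2+\delta}$. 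I would verify this last step by setting $g(\delta) = \ln(1+\delta) - \tfrac{2\delta}{2+\delta}$, noting $g(0)=0$ and
\[g'(\delta) \;=\; \frac{1}{1+\delta} - \frac{4}{(2+\delta)^2} \;=\; \frac{(2+\delta)^2 - 4(1+\delta)}{(1+\delta)(2+\delta)^2} \;=\; \frac{\delta^2}{(1+\delta)(2+\delta)^2} \;\ge\; 0,\]
so $g$ is nondecreasing on $[0,\infty)$ and therefore nonnegative there, which closes the chain of inequalities.

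There is no genuine obstacle here — this is a classical estimate — but if anything the one place that needs a little care is the final scalar comparison of $e^{\delta}/(1+\delta)^{1+\delta}$ with $e^{-\delta^2/(2+\delta)}$; the monotonicity argument above disposes of it cleanly. (One could avoid the explicit optimization by simply substituting $t = \ln(1+\delta)$ at the outset, but choosing the optimal $t$ is the most transparent route.)
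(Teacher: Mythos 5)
Your proof is correct and is the canonical exponential-moment argument; the one place that warrants care, the scalar inequality $\ln(1+\delta)\ge \tfrac{2\delta}{2+\delta}$, you handle cleanly with the sign of $g'$. Note, however, that the paper does not prove \autoref{thm:chernoff} at all: it is stated without proof in the appendix as a standard, well-known bound to be cited, so there is no alternative approach in the paper to compare against.
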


\begin{theorem}[McDiarmid's inequality]
\label{thm:mcdiarmid}
    Given $n$ independent random variables $X_1,X_2,\ldots,X_n \in \cX$ and a function $f: \cX^n \to \mathbb{R}$. If for every $i \in [n]$ and $x_1,x_2,\ldots,x_n,x_i' \in \cX$, the function $f$ satisfies
    \[|f(x_1,\ldots,x_{i-1},x_i,x_{i+1},\ldots,x_n) - f(x_1,\ldots,x_{i-1},x_i',x_{i+1},\ldots,x_n)| \le c_i,\]
    then for any $t > 0$, we have
    \[\Pr[f(X) \ge \E[f(X)]+t] \le \exp\left(-\frac{2t^2}{\sum_{i=1}^n c_i^2}\right).\]
\end{theorem}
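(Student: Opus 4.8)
The final statement is McDiarmid's inequality (the bounded-differences inequality), and the plan is the classical route: bound the moment generating function of $f(\vb{X}) - \E[f(\vb{X})]$ and finish with Markov's inequality. First I would build the Doob martingale of $f$ along the coordinates. Work with the filtration generated by $X_1,\dots,X_i$ for $0 \le i \le n$, and set $Z_i = \E[f(\vb{X}) \mid X_1,\dots,X_i]$, so that $Z_0 = \E[f(\vb{X})]$ is deterministic, $Z_n = f(\vb{X})$, and $(Z_i)_{i=0}^n$ is a martingale. Writing $D_i = Z_i - Z_{i-1}$ for the increments, we obtain $f(\vb{X}) - \E[f(\vb{X})] = \sum_{i=1}^n D_i$ with $\E[D_i \mid X_1,\dots,X_{i-1}] = 0$.

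The heart of the argument is to show that, conditioned on $X_1,\dots,X_{i-1}$, each increment $D_i$ is supported in an interval of length at most $c_i$ whose endpoints are measurable with respect to $X_1,\dots,X_{i-1}$. Setting $g_i(x_1,\dots,x_i) = \E[f(\vb{X}) \mid X_1 = x_1,\dots,X_i = x_i]$ and using that the coordinates are independent, one writes $Z_i = g_i(X_1,\dots,X_i)$ and $Z_{i-1} = \E_{X_i'}[g_i(X_1,\dots,X_{i-1},X_i')]$ for an independent copy $X_i'$; the bounded-differences hypothesis applied to the $i$-th coordinate (again integrating out $X_{i+1},\dots,X_n$ by independence) gives $|g_i(\cdots,x_i) - g_i(\cdots,x_i')| \le c_i$, which pins the conditional range of $D_i$ to width at most $c_i$. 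Hoeffding's lemma then yields $\E[e^{\lambda D_i} \mid X_1,\dots,X_{i-1}] \le e^{\lambda^2 c_i^2/8}$ for every $\lambda \in \R$. Peeling off the conditional expectations one coordinate at a time via the tower property (using that $D_1,\dots,D_{i-1}$ are measurable with respect to the earlier coordinates) gives $\E\big[e^{\lambda(f(\vb{X}) - \E[f(\vb{X})])}\big] \le \exp\big(\tfrac{\lambda^2}{8}\sum_{i=1}^n c_i^2\big)$. Finally, Markov's inequality gives $\Pr[f(\vb{X}) - \E[f(\vb{X})] \ge t] \le \exp\big(-\lambda t + \tfrac{\lambda^2}{8}\sum_i c_i^2\big)$ for all $\lambda > 0$, and optimizing at $\lambda = 4t/\sum_i c_i^2$ produces the claimed bound $\exp\big(-2t^2/\sum_i c_i^2\big)$.

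The step I expect to be the main obstacle — and really the only nontrivial one — is establishing that the conditional martingale increment $D_i$ lives in an interval of width $c_i$ with endpoints determined by $X_1,\dots,X_{i-1}$. This is precisely where both hypotheses are used simultaneously: independence of the coordinates is what lets us express $Z_i$ and $Z_{i-1}$ as a value of (respectively an average over) the $i$-th coordinate with the remaining coordinates integrated out cleanly, and only then does the coordinatewise bound $c_i$ control the fluctuation. Once that is in place, Hoeffding's lemma, the telescoping via the tower property, and the final optimization over $\lambda$ are all routine.
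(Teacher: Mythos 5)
Your proof is correct and is the standard Doob-martingale/Azuma--Hoeffding argument for the bounded-differences inequality: decompose $f(\vb{X})-\E[f(\vb{X})]$ as a sum of martingale increments $D_i = \E[f\mid X_1,\dots,X_i]-\E[f\mid X_1,\dots,X_{i-1}]$, use independence plus the coordinatewise bound to confine each increment to a predictable interval of width $c_i$, apply Hoeffding's lemma conditionally, telescope the MGF, and optimize the Chernoff parameter. The paper does not prove this theorem; it states it in the appendix as a background fact and cites~\cite{mcdiarmid1989method}, and your argument matches the classical proof in that reference.
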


\section{Missing Proofs and Examples}
\label{app:missing-proofs}
\begin{proof}[Proof of \autoref{lem:extended-k-fold-union-suffices}]
    It is straightforward to check $\cM_*$ in \autoref{def:extended-k-fold-union} is a matroid, and hence its $k$-fold union $\cM^k_*$ remains a matroid by the closure property of matroid union.
    Also, note that the restriction of $\cM^k_*$ to $E \times \{1\}$, $\cM^k_* \mid_{E\times\{1\}}$, is isomorphic to the $k$-fold union $\cM^k$ of $\cM$, as there exists a simple bijection $(e,1) \mapsto e$ between $E \times \{1\}, \cI_* |_{E\times\{1\}}$ and $E, \cI$.
    Therefore, an $\alpha$-selectable OCRS for $\cM^k_*$ can also be used as an $\alpha$-selectable OCRS for $\cM^k$.
\end{proof}

\begin{proof}[Proof of \autoref{lem:occupancy-function-basic-property}]
    Note that the rank function for any matroid is a submodular function. 
    Therefore, $\rank(S \cup (\{e\}\times[k])) - \rank(S) \ge \rank(T \cup (\{e\}\times[k])) - \rank(T)$ for every $S \subseteq T$ by a simple induction, and thus $\occ_e(\cdot)$ is monotone.
    
    Also, we know the rank function is monotone,  and the rank of a set can increase by at most $1$ after adding an element. Therefore, $\rank(S \cup \{a\}) \ge \rank(S)$ and $\rank(S \cup \{a\} \cup (\{e\}\times[k])) \le \rank(S \cup (\{e\}\times[k])) + 1$ for every $a \in E_*$ and $S \subseteq E_*$. As a result, $\occ_e(\cdot)$ is 1-Lipschitz.
\end{proof}

\begin{proof}[Proof of \autoref{lem:occupancy-function-good-indicator}]
    When $\occ_e(S) < k$, we have $\rank(S \cup (\{e\} \times [k])) > \rank(S )$ and there exists (at least) one element $(e,j) \in \{e\} \times [k]$ such that $(e,j) \notin \Span(S)$. By definition of extended $k$-fold union, it further implies $(e,i) \notin \Span(S \setminus \{(e,i)\})$.
\end{proof}

\begin{proof}[Proof of \autoref{lem:k-fold-protection-is-proper}]
    It is straightforward to see $S=S_0 \times [k]$ when the algorithm terminates, and thus $\cM^k_*|_{S_0 \times [k]}$ is the extended $k$-fold union of $\cM|_{S_0}$ by definition.
    It remains to prove $S \subsetneq E_*$.
    Since $S$ must be a subset of the universe $E_*$, it suffices to show $S\ne E_*$.
    Our plan is to show that $S$ is not full rank in $\cM^k_*$, even after combined with all active elements $R(\vb{x_*})$ and take the expectation, i.e.,
    \[\E[\rank_{\cM^k_*}(R(\vb{x_*}) \cup S)]<\rank_{\cM^k_*}(E_*).\]
    This would directly imply $S\ne E_*$ by the monotonicity of the rank function.

    Denote $r=\rank_{\cM}(S_0)$.
    Let $e_1,e_2,\ldots,e_r \in S_0$ be the elements from $\cM$ that increase the rank of $S_0$ in $\cM$ during the execution of \autoref{alg:find-k-fold-protection}, and denote $e_i$ (for $1\le i \le r$) as the specific element that increases $\rank_\cM(S_0)$ from $i-1$ to $i$. 
    By definition, $\Span_\cM(\{e_1,e_2,\ldots,e_r\})=S_0$.
    In fact, we also have 
    \[\Span_{\cM^k_*}(\{e_1,e_2\ldots,e_r\} \times [k])= S.\]
    This is because $\{e_1,e_2,\ldots,e_r\} \times [k] \subseteq S$ is an independent set of size $kr$ in $\cM^k_*$ by definition of the extended $k$-fold union, and we can further show it is a basis of $S$.
    Suppose it is not, then there must be another independent set $T \subseteq S$ of size larger than $kr$.
    Since one can partition $T$ into $k$ disjoint independent sets $T_1, T_2, \ldots, T_k$ in $\cM_*$ where $\sum_{j\in [k]} |T_j| > kr$, we know there exists some $T_{j}$ of size larger than $r$, which leads to a contradiction as $T_{j} \subseteq S_0 \times [k]$ and $\rank_{\cM_*}(S_0 \times [k])=r$.
    
    Also, by the way the algorithm picks elements to be added to $S_0$, for every $1\le i\le r$ we have
    \[\E[\occ_{e_i}(R(\vb{x_*}) \cup (\{e_1,e_2,\ldots, e_{i-1}\} \times [k]))] > bk.\]
    Equivalently, we have
    \[\E[\rank_{\cM^k_*}(R(\vb{x_*}) \cup (\{e_1,e_2,\ldots, e_{i}\} \times [k])) - \rank_{\cM^k_*}(R(\vb{x_*}) \cup (\{e_1,e_2,\ldots, e_{i-1}\} \times [k]))] < (1-b)k.\]

    Together with these observations, we can upper bound $\E[\rank_{\cM^k_*}(R(\vb{x_*}) \cup S)]$ by a telescoping sum as follows:
    \begin{align*}
        \E[\rank_{\cM^k_*}(R(\vb{x_*}) \cup S)]
        &= \E[\rank_{\cM^k_*}(R(\vb{x_*}) \cup \{e_1,e_2,\ldots,e_r\} \times [k])]\\
        &=\E[\rank_{\cM^k_*}(R(\vb{x_*}))]+\sum_{i=1}^{r} \E[\rank_{\cM^k_*}(R(\vb{x_*}) \cup (\{e_1,e_2,\ldots,e_i\} \times [k]))\\
        &~~~~~~~~~~~~~~~~~~~~~~~~~~~~~~~~~~~~-\rank_{\cM^k_*}(R(\vb{x_*}) \cup (\{e_1,e_2,\ldots,e_{i-1}\})\times [k])]\\
        &< \E[\rank_{\cM^k_*}(R(\vb{x_*}))]+ (1-b)k r.
    \end{align*}
    The former term $\E[\rank_{\cM^k_*}(R(\vb{x_*}))]$ can be trivially upper bounded by $\E[|R(\vb{x_*})|]$ and further by $b\rank_{\cM^k_*}(E_*)$ due to $\vb{x_*} \in b\cdot\cP_{\cM^k_*}$.
    For the latter term involving $k r$, we already know $kr=\rank_{\cM^k_*}(S) \le \rank_{\cM^k_*}(E_*)$.
    In conclusion, we have
    \begin{equation*}
      \E[\rank_{\cM^k_*}(R(\vb{x_*}) \cup S)] < b\rank_{\cM^k_*}(E_*) + (1-b)\rank_{\cM^k_*}(E_*) = \rank_{\cM^k_*}(E_*). \qedhere  
    \end{equation*}
\end{proof}

\begin{example}[A counterexample to \eqref{eq:dream-concentration}]
\label{example:kn-uniform}
    Fix parameters $n,k$ where $n \gg k$, and consider the case when $\cM$ is an $n$-uniform matroid of size $2n$.
    Its extended $k$-fold union $\cM^k_*$ is a $kn$-uniform matroid of size $2kn$.
    Similar to \autoref{example:k-uniform}, for every $e \in E$ and $S \subseteq E_*$ we can derive
    \[\occ_e(S) = \begin{cases}
        0, & |S| \le kn - k\\
        |S|-(kn-k), &  kn - k < |S| < kn\\
        k, & |S| \ge kn.
    \end{cases}\]
    Since no protection is needed for uniform matroids, let $f(\cdot) = \occ_e(\cdot)$ for some fixed $e \in E$.
    When $\vb{x_*} = (\frac{1}{2}-\frac{1}{2n}) \cdot \mathbf{1}_{E_*}$ (namely, every element in $E_*$ is active with probability $\frac{1}{2}-\frac{1}{2n}$), $|X|$ will follow a binomial distribution with $kn-k$ as both its mean and median. As a result,
    \begin{align*}
        \E[f(X)] &\le k \Pr[f(X) > 0] = k \Pr\left[|X| > kn - k\right] \le \frac{k}{2},\\
        \text{while}~~\Pr[f(X) \ge k] &= \Pr\left[|X|\ge kn\right] \ge \Omega(1), \tag{$n \gg k$}
    \end{align*}
    which is a counterexample to the claim \eqref{eq:dream-concentration}.
    
    Note that this is not an actual counterexample to \autoref{alg:our-ocrs} (even without the extra scaling) since $\vb{x_*} \notin (1-O(\sqrt{\frac{\log k}{k}})) \cdot \cP_{\cM^k_*}$.
    But it shows that the condition $\E[f(X)] \le k - O(\sqrt{k \log k})$ alone is not enough to derive a good enough upper bound for $\Pr[f(X)\ge k]$, and it is crucial to also rely on the scaling applied to $\vb{x_*}$.
\end{example}

\end{document}